\newtheorem{theorem}{Theorem}[section]
\newtheorem{lemma}[theorem]{Lemma}
\numberwithin{equation}{section}
\newcommand{\qed}{\rule{7pt}{7pt}}
\newenvironment{proof}{\noindent{\bf Proof}\hspace*{1em}}{\hfill\qed\vspace{0.125in}}
\begin{document}

\title{Sensing with Optimal Matrices}

\author{ Hema Kumari Achanta, Soura Dasgupta, and Weiyu Xu \\
Department of Electrical and Computer Engineering \\
University of Iowa\thanks{The authors are temporarily listed in alphabetical order of their names.}}

\maketitle

\begin{abstract}
We consider the problem of designing optimal $M \times N$ ($M \leq  N$) sensing matrices which minimize the maximum condition number of all the submatrices of $K$ columns. Such matrices minimize the worst-case estimation errors when only $K$ sensors out of $N$ sensors are available for sensing at a given time.  For $M=2$ and matrices with unit-normed columns, this problem is equivalent to the problem of maximizing the minimum singular value among all the submatrices of $K$ columns. For $M=2$, we are able to give a closed form formula for the condition number of the submatrices.  When $M=2$ and $K=3$, for an arbitrary $N\geq3$, we derive the optimal matrices which minimize the maximum condition number of all the submatrices of $K$ columns. Surprisingly, a uniformly distributed design is often \emph{not} the optimal design minimizing the maximum condition number.

\end{abstract}

\IEEEpeerreviewmaketitle

\section{Introduction}
Consider a set of $N$ sensors. These $N$ sensors are used to estimate an $M$-dimensional signal, where $N\geq M$. In the sensor scheduling problems, to maximize the lifetime of the sensor network, at any single time instant, only $K$ sensors are turned on to monitor the $M$-dimensional signal. In our system, we assume that each time these $K$ sensors are uniformly selected from the $\binom{N}{K}$ possible subsets, so on average the lifetime of the sensor network is extended by a factor of $\frac{N}{K}$. In hostile environments, for example, in battlefields, it is very common that only a limited number of sensors are able to survive and operate as designed. Suppose that only $K$ sensors out of the $N$ sensors are able to survive the hostile environment and be functional in sensing the $M$-dimensional signal. For these application scenarios, it is helpful to maximize the worst-case performance of the sensing system, no matter what set of sensors are used or are able to survive. In this paper, we consider the problem of optimal sensing schemes to achieve such a goal.

Suppose the signal is denoted as a vector $x \in R^{M}$. Let us consider a sensing matrix $A \in R^{M\times N}$. The sensing results of $N$ sensors can be represented by $N$ real numbers, each of which is the inner product between the signal $x$ and a column of $A$. Let ${KS} \subseteq \{1,2..., N\}$, with cardinality $|KS|=k$, be the subset sensors that are active at a certain time. We can then represent the measurement matrix of the surviving sensors by $A_{{KS}}$, where $A_{KS}$ is a $M \times K$ matrix consisting of columns indexed by $KS$ from $A$.

Then the $K$-dimensional measurement result $y$
\begin{equation*}
y=A_{KS}^{T}x+w,
\end{equation*}
where $w$ is the $K$-dimensional observation noise.

Then by the singular value decomposition, we have
\begin{equation*}
A_{KS}^{T}=U\Lambda V^{*},
\end{equation*}
where $\Lambda$ is a $M \times M$ diagonal matrix with the $K$ singular values $\sigma_1$, $\sigma_2$, $...$, and $\sigma_M$ on its diagonal.

Then the estimation error of $\hat{x}=(A_{KS}^{T}A_{KS})^{-1}A_{KS}^{T}(y)$ satisfies
\begin{equation*}
\|\hat{x}-x\|_2=\|(A_{KS}^{T}A_{KS})^{-1}A_{KS}^{T}(w)\|_2 \leq \frac{\|w\|_2}{\sigma_{min}}.
\end{equation*}

So in order to optimize the worst-case performance, we need to maximize the smallest singular value among all the possible $\binom{N}{K}$ possible subsets. This introduces a problem of designing the measurement matrix $A$. To make the problem meaningful, we assume that each column of the measurement matrix has unit norm.  Since each column of the measurement matrix $A$ has unit norm, when $M=2$, this is equivalent to minimizing the maximum condition number. 

In general, the condition number $\kappa(B)$ of a matrix $B$ is the ratio of the largest singular value $\sigma_{max}(B)$ and the smallest singular value $\sigma_{min}(B)$: $\kappa(B)=\frac{\sigma_{max}(B)}{\sigma_{min}(B)}$.

Let $A=[a_{1}, a_{2}, ...,  a_{N}]$,  where $a_{1},....,a_{N}$ are the columns of $A$. We assume here that $||a_{i}||_2=1$ holds true for $1\leq i \leq N$. Let ${KS} \subseteq \{1,2,...,N\}$ with cardinality $|KS|=K$. Now, let $A_{{KS}}$ be an $M \times K$ submatrix $A_{{KS}}=[a_{i_{1}}, a_{i_{2}},.....,a_{i_{K}}]$ with columns indices $i_j$, $1\leq j \leq K$, from the set $KS$. Also define
\begin{equation}
 \tilde{A}_{KS} = A_{{KS}}A_{{KS}}^T = \sum_{j=1}^{K} a_{i_{j}}a_{i_{j}}^T
\end{equation}
Using these notations, we can describe our optimal design problem for the parameter set $(M,N, K)$ as follows.
\begin{equation*}
\min_{A \in R^{M \times N} \text{with unit-normed columns}}  \left\{ \underset{{KS\subseteq \{1,2,...,N\}}}{\text{max}} \frac{\lambda_{max}(\tilde{A}_{KS})}{\lambda_{min}(\tilde{A}_{KS})}\right\}.
\end{equation*}

Compared with the design of compressive sensing matrices satisfying the restricted isometry condition \cite{CS}, in our problem, the submatrices $A_{KS}$ are wide matrices instead of tall matrices in \cite{CS}. Also, the application background is very different from compressive sensing. 

\section{Derivation of the Condition Number for $M=2$}
Generally, the optimal design for an arbitrary $M$, $N$ and $K$ is difficult to get. The difficulty arises from the fact, we need to optimize the maximum condition number among $\binom{N}{K}$ submatrices. In our applications, we focus on the case of $M=2$. When $M=2$, we can a concise formula for the condition number for a specific submatrix $\tilde{A}_{KS}$. 

We know that the condition number of $\tilde{A}_{KS}$ is given by
\begin{equation}
\kappa(\tilde{A}_{KS}) =\frac{\max_{||\eta||=1}(\eta^{T}\tilde{A}_{KS}\eta)}{\min_{||\eta||=1}(\eta^{T}\tilde{A}_{KS}\eta)}
\end{equation}
Since the columns of $A$ are unit-normed, we can represent $A=[a_1, a_2, ...., a_N]$ with
\begin{equation}
a_{i}=\left( \begin{array}{c}
\cos \theta_{i} \\
\sin \theta_{i}\\\end{array} \right)
\end{equation}
for $1\leq i \leq N$, where $\theta_{i}\in [0,\pi)$. Note that shifting $\theta_i$ by $\pi$ does not change the condition number of any submatrix.

Since $||\eta||_2=1$  we can choose
\begin{equation*}
\eta =\left( \begin{array}{c}
\cos \alpha \\
\sin \alpha\\\end{array} \right).
\end{equation*}

Thus
\begin{equation}
 \eta^{T} \tilde{A}_{KS} \eta = \{\sum_{j=1}^{{K}} \eta^{T}  a_{i_j}a_{i_j}^T\eta\}.
 \label{eq:sumproduct}
\end{equation}
And, $\eta^{T}  a_{i_j}a_{i_j}^T\eta$ is equal to
\begin{eqnarray*}
&& (\cos(\alpha)\cos(\beta_{i_j})+\sin(\alpha)\sin(\beta_{i_j}))^2\\
&&= \cos^{2}(\alpha-\theta_{i_j})
\end{eqnarray*}

After simplification, (\ref{eq:sumproduct}) becomes
\begin{equation}
 \eta^{T} \tilde{A}_{KS} \eta = \sum_{j=1}^{{K}}\cos^{2}(\alpha-\theta_{i_j}) = \frac{{K}}{2}+ \frac{1}{2}\sum_{j=1}^{{K}}\cos(2(\alpha-\theta_{i_j})).
\end{equation}
Let us define
\begin{equation}
 J(\alpha) =\frac{{K}}{2}+ \frac{1}{2}\sum_{j=1}^{{K}}\cos(2(\alpha-\theta_{i_j})).
 \label{eq:Jdefinition}
\end{equation}
Then the minimum or maximum eigenvalue of $\tilde{A}_{KS}$ is achieved when $J'(\alpha)=0$
\begin{equation}
 J'(\alpha) = -2\sum_{j=1}^{{K}}\sin(2(\alpha - \theta_{i_j}))=0.
 \label{eq:derivative0}
\end{equation}
We also have 
\begin{equation}
 J''(\alpha) = -4\sum_{j=1}^{{K}}\cos(2(\alpha - \theta_{i_j})) \leq 0
\end{equation}
at the maximum eigenvalue and the inequality is reversed at the minimum eigenvalue. An important observation to make is that $\alpha_{max}$ and $\alpha_{min}$ differ by $\frac{\pi}{2}$.

When ${(\sum_{j=1}^{{K}}\sin(2\theta_{i_j}))^2+(\sum_{j=1}^{{K}}\cos(2\theta_{i_j}))^2} \neq 0$, from (\ref{eq:derivative0}), the optimizing $\alpha$ satisfies
\begin{equation*}
 \cos(2\alpha)= \frac{\sum_{j=1}^{{K}}\cos(2\theta_{i_j})}{ \sqrt{{(\sum_{j=1}^{{K}}\sin(2\theta_{i_j}))^2+(\sum_{j=1}^{{K}}\cos(2\theta_{i_j}))^2}}}
\end{equation*}
and
\begin{equation*}
  \sin(2\alpha)= \frac{\sum_{j=1}^{{K}}\sin(2\theta_{i_j})}{\sqrt{{(\sum_{j=1}^{{K}}\sin(2\theta_{i_j}))^2+(\sum_{j=1}^{{K}}\cos(2\theta_{i_j}))^2}}}.
\end{equation*}
From expansion of (\ref{eq:Jdefinition}), we get
\begin{equation}
 J(\alpha) =\frac{{K}}{2}+ \frac{1}{2}\sum_{j=1}^{{K}}\cos(2\alpha)\cos(2\theta_{i_j})+\frac{1}{2}\sum_{j=1}^{{K}}\sin(2\alpha)\sin(2\theta_{i_j}).
 \label{eq:Jexpansion}
\end{equation}
Combining the optimizing $\alpha$ and (\ref{eq:Jexpansion}), we have
\begin{eqnarray}
&&J(\alpha) = \frac{{K}}{2} +\nonumber\\
&&\frac{1}{2}
\frac{\sum_{j=1}^{{K}}\sum_{l=1}^{{K}}(\cos(2\theta_{i_l})\cos(2\theta_{i_j})+ \sin(2\theta_{i_l})\sin(2\theta_{i_j})) }
{\sqrt{(\sum_{l=1}^{{K}}\sin(2(\theta_{i_l})))^2+(\sum_{l=1}^{{K}}\cos(2\theta_{i_l}))^2}}.\nonumber
\label{eq:Jexpansion2}
\end{eqnarray}
Define $\mathrm{den}^2=(\sum_{l=1}^{{K}}\sin 2\theta_{i_l})^2+(\sum_{l=1}^{{K}}\cos 2\theta_{i_l})^2$.  Then
\begin{eqnarray*}
\mathrm{den}^2&=&\sum_{l=1}^{{K}}(\sin^2 2\theta_{i_l}+\cos^2 2\theta_{i_l})\nonumber\\
&+& \sum_{j=1}^{{K}}\sum_{l=1,l \neq j}^{{K}} \cos2\theta_{i_l}\cos2\theta_{i_j}+\sum_{j=1}^{{K}}\sum_{l=1}^{{K}} \sin 2\theta_{i_l}\sin 2\theta_{i_j}\nonumber\\
&=&{K} + 2\sum_{j=1}^{{K}}\sum_{l=j+1}^{{K}} \cos2(\theta_{i_l}-\theta_{i_j})\nonumber
\end{eqnarray*}
Similarly, we define $\mathrm{num} = \sum_{j=1}^{{K}} \sum_{l=1}^{{K}}\cos(2\theta_{i_l})\cos(2\theta_{i_j})+\sum_{j=1}^{{K}} \sum_{l=1}^{{K}}\sin(2\theta_{i_l})\sin(2\theta_{i_j})$. 

It can be expanded as
\begin{IEEEeqnarray*}{rcl}
\mathrm{num}&=& \sum_{l=1}^{{K}}(\sin^2 2\theta_{i_l}+\cos^2 2\theta_{i_l})+ \sum_{j=1}^{{K}}\sum_{l=1,l \neq j}^{{K}} \cos2\theta_{i_l}\cos2\theta_{i_j}\IEEEnonumber\\
&+&  \sum_{j=1}^{{K}}\sum_{l=1,l \neq j}^{{K}} \sin 2\theta_{i_l}\sin 2\theta_{i_j}\IEEEnonumber\\
&=&{K} + 2\sum_{j=1}^{{K}}\sum_{l=j+1}^{{K}} \cos2(\theta_{i_l}-\theta_{i_j})\nonumber
\end{IEEEeqnarray*}

Plugging $\mathrm{den}$ and $\mathrm{num}$ into (\ref{eq:Jexpansion2}), we get
\begin{equation}
J(\alpha_{max})=\frac{{K}}{2}+ \frac{1}{2}\sqrt{\frac{{K}}{2}+\sum_{j=1}^{{K}}\sum_{l=j+1}^{{K}} \cos2(\theta_{i_l}-\theta_{i_j})},
\label{eq:Jmax}
\end{equation}
and \begin{equation}
J(\alpha_{min})=\frac{{K}}{2}- \frac{1}{2}\sqrt{\frac{{K}}{2}+\sum_{j=1}^{{K}}\sum_{l=j+1}^{{K}} \cos2(\theta_{i_l}-\theta_{i_j})}.
\label{eq:Jmin}
\end{equation}
Thus minimizing the condition number of $\tilde{A}_{{KS}}$ for a given set of indices $\{i_{1},i_{2},..,i_{{K}}\}$ is the same as this optimization problem 
 \begin{equation}
\begin{aligned}
& \underset{}{\text{minimize}}
& &  {\sum_{j=1}^{{K}}\sum_{l=j+1}^{{K}} \cos2(\theta_{i_l}-\theta_{i_j})}.
\end{aligned}
\end{equation}

With $KS\subseteq \{1,2,...,N\}$,  the optimal sending matrix design problem for $M=2$ can be reformulated as,
\begin{eqnarray*}
\min_{\theta_{1}, ..., \theta_{N}}  \underset{KS=\{i_{1},i_{2},..,i_{K}\} } {\text{max}}
{\sum_{j=1}^{{K}}\sum_{l=j+1}^{{K}} \cos2(\theta_{i_l}-\theta_{i_j})}.\\
\end{eqnarray*}

One can easily find the optimal solution for $K=2$.
\begin{theorem}
Let ${K}=2$, $M=2$ and let $N\geq 2$ be an integer. Then the set of angles $\Theta=\{0,\frac{\pi}{N},\frac{2\pi}{N}...\frac{(N-1)\pi}{N}\}$ minimizes the maximum condition number over all possible sub-matrices with two columns.
\end{theorem}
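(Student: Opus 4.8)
The plan is to use the minimax reformulation derived just above the theorem. For $K=2$ the double sum $\sum_{j=1}^{K}\sum_{l=j+1}^{K}\cos 2(\theta_{i_l}-\theta_{i_j})$ collapses to the single term $\cos 2(\theta_{i_2}-\theta_{i_1})$, so the design problem becomes
\[
\min_{\theta_1,\dots,\theta_N}\ \max_{1\le i<j\le N}\ \cos 2(\theta_j-\theta_i).
\]
Thus I only need to show that, among all placements of $N$ angles in $[0,\pi)$, the equally spaced set $\Theta$ makes the largest pairwise value of $\cos 2(\theta_j-\theta_i)$ as small as possible, and that this smallest achievable maximum equals $\cos(2\pi/N)$.

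First I would establish the lower bound: for \emph{any} angles $\theta_1,\dots,\theta_N$ one has $\max_{i<j}\cos 2(\theta_j-\theta_i)\ge \cos(2\pi/N)$. I would sort the angles as $\theta_{(1)}\le\cdots\le\theta_{(N)}$ and consider the $N-1$ consecutive gaps $\delta_i=\theta_{(i+1)}-\theta_{(i)}$ together with the wrap-around gap $\delta_N=\pi-(\theta_{(N)}-\theta_{(1)})$, exploiting the paper's observation that shifting an angle by $\pi$ changes nothing. Since these $N$ gaps are nonnegative and sum to $\pi$, the smallest gap $\delta$ satisfies $\delta\le\pi/N\le\pi/2$. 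If the minimizing gap is an interior one, $\theta_{(i+1)}-\theta_{(i)}=\delta$, then the pair $(i,i+1)$ gives $\cos 2\delta\ge\cos(2\pi/N)$ because $\cos$ is decreasing on $[0,\pi]$ and $0\le 2\delta\le 2\pi/N\le\pi$; if it is the wrap-around gap, the pair $(\theta_{(1)},\theta_{(N)})$ has difference $\pi-\delta$, and $\cos 2(\pi-\delta)=\cos 2\delta\ge\cos(2\pi/N)$. Either way some pair attains a value at least $\cos(2\pi/N)$, proving the bound.

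Next I would verify that $\Theta$ attains this bound: for the equally spaced design every pairwise difference is $k\pi/N$ for some $k\in\{1,\dots,N-1\}$, and $\cos 2(k\pi/N)$ is largest at $k=1$ and $k=N-1$, both giving $\cos(2\pi/N)$ via $\cos(2\pi-2\pi/N)=\cos(2\pi/N)$. Hence $\max_{i<j}\cos 2(\theta_j-\theta_i)=\cos(2\pi/N)$ for $\Theta$, matching the lower bound, so $\Theta$ is optimal.

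The routine parts are the trigonometric bookkeeping; the one step requiring genuine care is the lower bound, specifically the correct treatment of the mod-$\pi$ (circular) geometry. The subtlety is that $\cos 2(\theta_j-\theta_i)$ is large not only when two angles are close but also when their difference is near $\pi$, so a naive ``closest pair on the line'' argument is insufficient; the wrap-around gap $\delta_N$ is exactly what captures the near-$\pi$ case, and recognizing that all $N$ gaps sum to $\pi$ (hence the minimum is at most $\pi/N$) is the crux that forces the bound. I would also briefly note the boundary case $N=2$, where the interior gap and the wrap-around gap both equal $\pi/2$, to confirm the argument is uniform in $N\ge2$.
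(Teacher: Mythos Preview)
Your proof is correct and follows essentially the same approach as the paper's: both reduce the $K=2$ design problem to $\min_{\Theta}\max_{i<j}\cos 2(\theta_j-\theta_i)$ and then observe that this amounts to maximizing the minimum circular gap among the doubled angles, which is achieved by equal spacing. The paper's proof is extremely terse (it simply asserts the optimum is ``apparently achieved''), whereas you actually carry out the pigeonhole argument on the $N$ gaps summing to~$\pi$, correctly handle the wrap-around gap to capture the mod-$\pi$ periodicity, and verify achievability explicitly; so your version is the same argument made rigorous.
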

\begin{proof}
The optimal design  minimizes the cost function
\begin{equation}
\begin{aligned}
& \underset{}{}
& &  {\cos2(\theta_{i_l}-\theta_{i_j})} \\
\end{aligned}
\end{equation}
for the set of indices $\{i_{1},i_{2}\} \subseteq \{1,2,...,N\}$ which gives the largest cost function.

Without loss of generality, we let $\theta_{i}$, $1\leq i \leq N$, lie in the range $[0,\pi)$ and let $\theta_{1}=0$. In order to minimize the maximum condition number, we only need to maximize the minimum of $\min\{|2\theta_{i_l}-2\theta_{i_j}|, 2\pi-(2\theta_{i_l}-2\theta_{i_j})|\}$. This is apparently achieved with the given set of angles.
\end{proof}

In the following sections, we will derive the optimal design for $K=3$.

\section{$K=3$, $N$ is an even number}

Surprisingly, unlike $K=2$, the optimal matrix design for $K=3$ is often not achieved with the uniformly distributed angles. 

\begin{theorem}
Let $K=3$ and $N$ be an even number. Then the set of angles $\theta_i=\frac{2\pi (i-1)}{N} \mod \pi$, $1 \leq i \leq  N$, minimizes the maximum condition number among all sub-matrices with $K$ columns. Moreover, they are the unique set of angles that achieve the smallest maximum condition number for $N \geq 6$.
\label{thm:K3NEven}
\end{theorem}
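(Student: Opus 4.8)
The plan is to strip the condition-number language down to a geometric problem on the circle, prove a matching lower bound, and then read off uniqueness from the equality case. By \eq{Jmax} and \eq{Jmin}, for a fixed triple the condition number is a strictly increasing function of $S:=\sum_{j<l}\cos 2(\theta_{i_l}-\theta_{i_j})$, so minimizing the worst condition number is the same as minimizing $\max_{KS} S$. Writing $\phi_i=2\theta_i$ and $u_i=(\cos\phi_i,\sin\phi_i)$, the identity $\|u_{i_1}+u_{i_2}+u_{i_3}\|^2 = 3+2S$ turns the task into placing $N$ unit vectors $u_1,\dots,u_N$ on the circle so as to minimize the largest squared length of a three-term sum. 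Under doubling, the proposed angles collapse to $N/2$ equally spaced directions, each carrying two of the $u_i$; this is the object I must show is optimal.

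First I would evaluate the proposed design. A triple is either two coincident vectors plus a third, or three distinct grid directions; checking both types shows the maximum is $V^{*}=5+4\cos\frac{4\pi}{N}$, attained by a coincident pair together with its nearest neighbouring direction. Next I would prove $\max_{KS}\|\cdot\|^2\ge V^{*}$ for \emph{every} configuration. Sorting the doubled angles and letting $g_1,\dots,g_N$ be their cyclic gaps (so $\sum_i g_i=2\pi$), the $N$ consecutive sums $g_i+g_{i+1}$ add up to $4\pi$, hence some three consecutive points span an arc of length $s\le\frac{4\pi}{N}$. For three vectors confined to an arc of length $s\le\pi$, minimizing over the middle vector's position gives $\|u_{a}+u_b+u_c\|^2\ge 5+4\cos s\ge 5+4\cos\frac{4\pi}{N}=V^{*}$, since $\cos$ is decreasing. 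This proves optimality.

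For uniqueness when $N\ge 6$, I would trace the equality conditions. Optimality forces the minimal-span triple to satisfy $5+4\cos s\le V^{*}$, hence $s=\frac{4\pi}{N}$; since $\frac{4\pi}{N}$ is exactly the average of the sums $g_i+g_{i+1}$, all of them must equal $\frac{4\pi}{N}$. Because $N$ is even, the recurrence $g_{i+1}=\frac{4\pi}{N}-g_i$ yields gaps alternating between $a$ and $\frac{4\pi}{N}-a$. Every consecutive triple of such a configuration then has the common value $W(a)=3+2\bigl(2\cos\frac{2\pi}{N}\cos(a-\frac{2\pi}{N})+\cos\frac{4\pi}{N}\bigr)$, and a short computation gives $W(a)-V^{*}=4\cos\frac{2\pi}{N}\bigl(\cos(a-\frac{2\pi}{N})-\cos\frac{2\pi}{N}\bigr)$. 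For $N\ge 6$ one has $\cos\frac{2\pi}{N}>0$, so $W(a)>V^{*}$ unless $a\in\{0,\frac{4\pi}{N}\}$, i.e. unless the vectors coincide in pairs; this is the proposed design up to a global rotation (which leaves all condition numbers fixed). Note that at $N=4$ the factor $\cos\frac{2\pi}{N}$ vanishes, so $W(a)\equiv V^{*}$ and uniqueness genuinely fails, consistent with the threshold $N\ge 6$.

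The easy part is the reduction and the lower bound, which are two or three lines each. The hard part will be the uniqueness argument: I must promote equality in a \emph{single}-triple bound to the rigid, global alternating-gap structure, and then carry out the elimination that isolates the factor $\cos\frac{2\pi}{N}$ and thereby explains why $N\ge 6$ is exactly the right threshold. A secondary, more routine obstacle is confirming that in the proposed design no non-consecutive triple (in particular no triple of three widely separated grid directions) exceeds $V^{*}$, so that its maximum really is $V^{*}$ and not something larger.
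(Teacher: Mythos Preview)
Your proposal is correct and follows essentially the same route as the paper: the same averaging (pigeonhole) argument produces a consecutive triple of span at most $4\pi/N$, the same minimization over the middle point gives the lower bound $1+2\cos(4\pi/N)$ (your $V^*=5+4\cos(4\pi/N)$ after the $\|u_{i_1}+u_{i_2}+u_{i_3}\|^2=3+2S$ repackaging), achievability is checked by the same case analysis, and uniqueness comes from the same two steps of forcing all consecutive spans to equal $4\pi/N$ and then pinning down the gaps. Your explicit factorization $W(a)-V^*=4\cos\frac{2\pi}{N}\bigl(\cos(a-\tfrac{2\pi}{N})-\cos\tfrac{2\pi}{N}\bigr)$ is a slightly cleaner way to execute that last step than the paper's appeal to strict monotonicity of $f(t_2)$, and it makes the $N=4$ degeneracy transparent, but the underlying argument is the same.
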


\begin{proof}
We first derive a lower bound for the maximum condition number among all sub-matrices with $K=3$ columns; and then show the given set of angles achieve this lower bound.

Suppose that the set of angles  $0\leq \theta_i^*<\pi$, $1\leq i \leq N$,  achieve the smallest maximum condition number for all submatrices with $K=3$ columns. Without loss of generality, let $\theta_1^*=0$; and let $\theta_i^*$, $1\leq i \leq N$, appear sequentially in a counter-clockwise order. Let $\tilde{\theta}_i=2\theta_{i}^*$, so we have $0\leq \tilde{\theta}_i <2\pi$.

  \emph{(lower bound for maximum condition number)}\\
  We claim that there must exist an index $1\leq i \leq N$ such that for $\tilde{\theta}_{i}$, $\tilde{\theta}_{(i+1)\mod N}$, and $\tilde{\theta}_{(i+2)\mod N}$,  $|(\tilde{\theta}_{(i+2)\mod N}-\tilde{\theta}_{i})\mod(2\pi)| \leq \frac{4\pi}{N}$.  Notice that $|(\tilde{\theta}_{(i+2)\mod N}-\tilde{\theta}_{i})\mod(2\pi)|$ is just the counter-clockwise region going from $\tilde{\theta}_{i}$ to $\tilde{\theta}_{(i+2)\mod N}$. So the summation $ \sum_{i=1}^{N}|(\tilde{\theta}_{(i+2)\mod N}-\tilde{\theta}_{i})\mod(2\pi)| =2 \times (2\pi)$ because each counter-clockwise region between two adjacent angles are summed twice. By looking at the average of these $N$ summands,  such an index $i$ must exist.

 For simplicity of notations, we denote these three angles $\tilde{\theta}_{i}$, $\tilde{\theta}_{(i+1)\mod N}$, and $\tilde{\theta}_{(i+2)\mod N}$ as $t_1$, $t_2$ and $t_3$.  Without loss of generality, we assume that $0=t_1 \leq t_2 \leq t_3 \leq \frac{4\pi}{N}$.  We how that the smallest condition number that these three angles $t_1$, $t_2$, and $t_3$ can achieve is when $t_2=t_1$ or $t_2=t_3$.

  We consider the scenario where $|t_3-t_1| \leq \frac{4\pi}{N}$ remains as a fixed constant. Define $f(t_2)$ as
  \begin{equation*}
  f(t_2)=\cos(t_1-t_2)+\cos(t_1-t_3)+\cos(t_2-t_3).
  \end{equation*}
Its derivative is
  \begin{eqnarray*}
  f'(t_2)&=&-\sin(t_2-t_1)+\sin(t_3-t_2)\\
  &=&2\sin(\frac{t_3+t_1}{2}-t_2)\cos(\frac{t_3-t_1}{2}).
  \end{eqnarray*}

So if $(t_3-t_1) \leq \pi$, the derivative $f'(t_2)$ is non-positive for $\frac{t_3+t_1}{2}\leq t_2 \leq \frac{t_3+t_1}{2}+\pi$; and it is non-negative for $\frac{t_3+t_1}{2}+\pi\leq t_2 \leq \frac{t_3+t_1}{2}+2\pi$. So if $0=t_1 \leq t_2 \leq t_3 \leq \frac{4\pi}{N}$, $f(t_2)$ is minimized when $t_2=t_1$ or $t_2=t_3$. The corresponding $f(t_2)$ is
\begin{equation*}
f(t_2=t_1)=f(t_2=t_3)=1+2\cos(t_1-t_3) \geq 1+2\cos(\frac{4\pi}{N}).
\end{equation*}

\emph{(Achievability)}\\
 In order to finish the proof, we only need to show that the given set of angles $\theta_i=\frac{2\pi (i-1)}{N} \mod \pi$, $1 \leq i \leq  N$, achieve the lower bound $1+2\cos(\frac{4\pi}{N})$. Let $\ddot{\theta}_i=2\theta_{i}$, so we have $0\leq \ddot{\theta}_i <2\pi$. Counter-clockwise, starting from the two angles $\ddot{\theta}_i=0$ and $\ddot{\theta}_{\frac{N}{2}+1}=0$ (which are in fact two angles in the same position), we \emph{re-label} these $N$ angles sequentially as $\hat{\theta}_1$, $\hat{\theta}_2$, ..., and $\hat{\theta}_{N}$.

 Namely, we need to show, for any $3$ angles $r_1$, $r_2$ and $r_3$ from the given set of angles $\hat{\theta}_i$ satisfy
\begin{equation*}
\cos(r_1-r_2)+\cos(r_2-r_3)+\cos(r_1-r_3) \leq 1+2\cos(\frac{4\pi}{N}).
\end{equation*}

Without loss of generality, we assume that $r_1$, $r_2$ and $r_3$ are in a counter-clockwise order; and assume that $|(r_2-r_1)\mod(2\pi)|$ is the smallest among $|(r_2-r_1)\mod (2\pi)|$, $|(r_3-r_2) \mod (2\pi)|$ and $|(r_1-r_3)\mod (2\pi)|$. Apparently, $|(r_2-r_1)\mod(2\pi)| \leq \frac{2\pi}{3}$, and $|(r_2-r_1)\mod(2\pi)|$ is an integer multiple of $\frac{4\pi}{N}$.

Suppose $|(r_2-r_1)\mod(2\pi)|=0$. Then $r_2=r_1$ and $|(r_1-r_3)\mod(2\pi)|=|(r_3-r_2)\mod(2\pi)| \geq \frac{4\pi}{N}$. Similar to the proof of ``lower bound'', for such a setting, the function
  \begin{equation*}
  f(r_3)=\cos(r_1-r_2)+\cos(r_1-r_3)+\cos(r_2-r_3)
  \end{equation*}
is a decreasing function of $r_3$ for $r_3 \in [r_1, (r_1+\pi) \mod (2\pi)]$; and an increasing function of $r_3$ for $r_3 \in [(r_1+\pi) \mod (2\pi), (r_1+2\pi) \mod (2\pi)]$. So the maximum of $f(r_3)$ is achieved when $|(r_1-r_3) \mod (2\pi)|=\frac{4\pi}{N}$, where $f(r_3)=1+2\cos(\frac{4\pi}{N})$.

Suppose $|(r_2-r_1)\mod(2\pi)|=\frac{4\pi}{N}$. Then $|(r_1-r_3)\mod(2\pi)| \geq \frac{4\pi}{N}$ and $|(r_3-r_2)\mod(2\pi)| \geq \frac{4\pi}{N}$. Similar to the reasoning in the ``lower bound'' part, the maximum for $f(r_3)=\cos(r_1-r_2)+\cos(r_1-r_3)+\cos(r_2-r_3)$ is achieved when $|(r_3-r_2) \mod (2\pi)|=\frac{4\pi}{N}$, $r_3 \neq r_1$; or $|(r_1-r_3) \mod (2\pi)|=\frac{4\pi}{N}$ and $r_3 \neq r_2$. In both cases, $f(r_3)=2\cos(\frac{4\pi}{N})+\cos(\frac{8\pi}{N})$, which is smaller than the lower bound $1+2\cos(\frac{4\pi}{N})$.

Now suppose $|(r_2-r_1)\mod(2\pi)|>\frac{4\pi}{N}$. Since $|(r_2-r_1)\mod(2\pi)| \leq \frac{2\pi}{3}$ and $r_3$ is certainly outside the counter-clockwise region going from $r_1$ to $r_2$, using the same reasoning in proving the lower bound, the function
  \begin{equation*}
  f(r_3)=\cos(r_1-r_2)+\cos(r_1-r_3)+\cos(r_2-r_3)
  \end{equation*}
achieves its maximum when $r_3=r_2$ or $r_3=r_1$. This maximum is
\begin{equation*}
  f(r_3=r_1)=1+2\cos(r_1-r_2),
\end{equation*}
which is certainly no bigger than $1+2\cos(\frac{4\pi}{N})$.

So the given set of angles indeed achieves the lower bound $1+2\cos(\frac{4\pi}{N})$, and we have proven the optimality of the given set of angles in minimizing the maximum condition number among all submatrices with $3$ columns.

\emph{(Uniqueness)}\\
Moreover, in the proof of the lower bound, when $N \geq 6$, $(t_3-t_1) < \pi$, the derivative $f'(t_2)$ is negative for $\frac{t_3+t_1}{2}< t_2 < \frac{t_3+t_1}{2}+\pi$; and positive for $\frac{t_3+t_1}{2}+\pi<t_2 < \frac{t_3+t_1}{2}+2\pi$. So $t_3=t_1$ or $t_3=t_2$ are the only two places where $f(t_3)$ achieves the lower bound $1+2\cos(\frac{4\pi}{N})$. We further notice that the lower bound is achieved only when the counter-clockwise region between any $3$ adjacent angles from $\tilde{\theta}_{i}$, $1\leq i \leq N$, is equal to $\frac{4\pi}{N}$. Otherwise, if there exist one set of $3$ adjacent angles from $\tilde{\theta}$ such that the region between them is larger than $\frac{4\pi}{N}$, there must exist another set of $3$ adjacent angles from $\tilde{\theta}_{i}$, $1\leq i \leq N$, such that the counter-clockwise region between them is smaller than $\frac{4\pi}{N}$. This is because $ \sum_{i=1}^{N}|\tilde{\theta}_{(i+2)\mod N}-\tilde{\theta}_{i}| =2 \times (2\pi)$. For these set of $3$ angles, their corresponding cost function $f(\cdot)$ is larger than the derived cost function lower bound $1+2\cos(\frac{4\pi}{N})$, thus bringing a larger maximum condition number. This proves for $N\geq6$, $\theta_i=\frac{2\pi (i-1)}{N} \mod \pi$, $1 \leq i \leq  N$ are the unique set of angles that minimize the maximum condition number.

\end{proof}

It is worth mentioning that when $N=4$, the design given in Theorem \ref{thm:K3NEven} is still optimal. However, we have more than one design that can minimize the maximum condition number.  This is because, when counter-clockwise region covered by $3$ angles is $\pi$, no matter where the middle angle is, the cost function is $-1$.
\begin{theorem}
For $N=4$, ${K}=3$ and $M=2$, the set of $\tilde \Theta =2\Theta= \{\tilde \theta_{1}, \tilde \theta_{2}, \tilde \theta_{1} + \pi, \tilde \theta_{2} + \pi\}$, where $0\leq \tilde \theta_{1}<\pi$, $0\leq \tilde \theta_{2}<\pi$, minimizes the maximum condition number over all possible $3 \times 3$ submatrices.
\end{theorem}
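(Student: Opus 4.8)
The plan is to reduce, exactly as in the preceding theorems, to the equivalent combinatorial cost problem in the doubled angles $\tilde{\theta}_i = 2\theta_i$. For $K=3$ the condition number of the submatrix on indices $\{a,b,c\}$ is governed by the cost
\[
f(a,b,c)=\cos(\tilde{\theta}_a - \tilde{\theta}_b) + \cos(\tilde{\theta}_b - \tilde{\theta}_c) + \cos(\tilde{\theta}_a - \tilde{\theta}_c),
\]
and the design problem becomes minimizing the maximum of $f$ over the $\binom{4}{3}=4$ three-element subsets of $\{1,2,3,4\}$. First I would invoke the lower-bound half of the proof of Theorem \ref{thm:K3NEven}, which applies verbatim at $N=4$: there always exist three cyclically adjacent doubled angles spanning at most $4\pi/N=\pi$, and for such a triple the cost is at least $1+2\cos(4\pi/N)=1+2\cos\pi=-1$. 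Hence no design can push the maximum cost below $-1$.

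The heart of the argument is a single trigonometric cancellation, which I would isolate as an \emph{antipodal} fact: whenever a triple of doubled angles contains an antipodal pair $t$ and $t+\pi$ together with an arbitrary third angle $s$, its cost is identically $-1$, since
\[
\cos(\pi) + \cos(t-s) + \cos(t+\pi-s) = -1 + \cos(t-s) - \cos(t-s) = -1,
\]
independent of $s$. This is precisely the phenomenon flagged in the remark preceding the theorem: once the span of a triple equals $\pi$, the position of the middle angle is irrelevant.

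With this in hand achievability is immediate. The four doubled angles of the proposed design split into two disjoint antipodal pairs, $\{\tilde{\theta}_1, \tilde{\theta}_1+\pi\}$ and $\{\tilde{\theta}_2, \tilde{\theta}_2+\pi\}$. Any three-element subset omits exactly one of the four angles, and that omitted angle belongs to only one of the two pairs; therefore the other pair survives intact inside the subset. By the antipodal fact, every one of the four triples has cost exactly $-1$, so the maximum cost equals the lower bound $-1$ for \emph{every} choice of $\tilde{\theta}_1, \tilde{\theta}_2 \in [0,\pi)$. This simultaneously establishes optimality and exhibits the entire two-parameter family of optima, accounting for the loss of uniqueness at $N=4$.

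I expect no serious obstacle; the only points needing care are checking that the lower-bound derivation of Theorem \ref{thm:K3NEven} degenerates correctly at $N=4$ (where $4\pi/N=\pi$ collapses the bound to $-1$ and the derivative factor $\cos\tfrac{t_3-t_1}{2}$ vanishes, making $f$ constant along the span-$\pi$ triples) and confirming the pigeonhole step that each triple captures a full antipodal pair. It is worth noting in passing that $-1$ sits strictly above the unconstrained minimum $-\tfrac32$ of the three-cosine cost, since $f(a,b,c)=\tfrac12(\|v_a+v_b+v_c\|^2-3)$ for the unit vectors $v_i$ at angles $\tilde{\theta}_i$, the value $-\tfrac32$ being attained only by an equispaced triple; thus the optimum at $N=4$ is dictated by the combinatorics of four points rather than by any single favorable triple.
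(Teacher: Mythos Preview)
Your proposal is correct and follows essentially the same approach as the paper. The paper does not give a separate formal proof of this theorem; it relies on the lower-bound half of Theorem~\ref{thm:K3NEven} (yielding $1+2\cos(4\pi/4)=-1$) together with the one-line remark that when a triple spans exactly $\pi$ the cost is $-1$ regardless of the middle angle, which is exactly your ``antipodal fact'' and pigeonhole observation made explicit.
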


\section{$K=3$, $N=3$ or $5$}
Interestingly, when $K=3$, except for the trivial case $N=3$, $N=5$ is the only other case where a uniform distributed design indeed minimizes the maximum condition number.

\begin{theorem}
Let $K=3$ and $N=3$ or $5$. Then the set of angles $\theta_i=\frac{\pi (i-1)}{N}$, $1 \leq i \leq  N$, minimizes the maximum condition number among all sub-matrices with $K=3$ columns.
\label{thm:K3N5}
\end{theorem}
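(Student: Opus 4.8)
The plan is to reuse the reduction established above: for $M=2$ with unit-norm columns, minimizing the maximum condition number is equivalent to minimizing, over all choices of angles, the maximum over $3$-element index sets of $\sum_{j<l}\cos 2(\theta_{i_l}-\theta_{i_j})$, since the condition number of each submatrix is monotone in this quantity through \eqref{eq:Jmax}--\eqref{eq:Jmin}. Writing $\tilde\theta_i=2\theta_i\in[0,2\pi)$ and viewing the $\tilde\theta_i$ as points on the unit circle, a triple with successive arcs $a,b,c$ (so $a+b+c=2\pi$) contributes cost $\cos a+\cos b+\cos c$. The elementary identity $\cos a+\cos b+\cos c=\tfrac12(\|v\|_2^2-3)$, where $v$ is the sum of the three corresponding unit vectors, immediately yields the universal lower bound $-\tfrac32$, attained exactly when $v=0$, i.e. when the three points are equispaced. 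For $N=3$ there is a single triple, so this settles the case: the equilateral configuration $\tilde\theta_i\in\{0,2\pi/3,4\pi/3\}$, namely $\theta_i=\pi(i-1)/3$, is the (unique) minimizer.

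For $N=5$ I would first verify achievability. Under the proposed design the $\tilde\theta_i=2\pi(i-1)/5$ form a regular pentagon, so the arcs of every triple are positive multiples of $2\pi/5$ summing to $2\pi$; the only two arc-patterns are $(1,1,3)$ and $(1,2,2)$ times $2\pi/5$, with costs $2\cos\frac{2\pi}{5}+\cos\frac{4\pi}{5}$ and $\cos\frac{2\pi}{5}+2\cos\frac{4\pi}{5}$ respectively. Since $\cos\frac{2\pi}{5}>\cos\frac{4\pi}{5}$, the first is larger, so the maximum triple-cost of the regular pentagon equals $2\cos\frac{2\pi}{5}+\cos\frac{4\pi}{5}=\frac{\sqrt5-3}{4}$.

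The substance is the matching lower bound: every $5$-point configuration must contain some triple of cost at least $\frac{\sqrt5-3}{4}$. Let $g_1,\dots,g_5>0$ be the cyclic gaps and consider the consecutive triple whose two-gap span $s=g_i+g_{i+1}$ is smallest; relabel so this is $g_1+g_2$. The decisive observation is that minimality of $s$ is a genuine constraint on $(g_1,g_2)$: the remaining gaps must exist with $g_3\ge g_1$, $g_5\ge g_2$, $g_3+g_4\ge s$, $g_4+g_5\ge s$ and $g_3+g_4+g_5=2\pi-s$, and a short feasibility computation (the minimal admissible total for $g_3,g_4,g_5$ is $s+\max(g_1,g_2)$) forces $\max(g_1,g_2)\le 2\pi-2(g_1+g_2)$. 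I would then minimize the cost $C(g_1,g_2)=\cos g_1+\cos g_2+\cos(g_1+g_2)$ of this single triple over the region $\{0\le g_1\le g_2,\ 2g_1+3g_2\le 2\pi\}$. On this region $g_1,g_2,g_1+g_2\in[0,4\pi/5]\subset[0,\pi)$, so both partial derivatives $-\sin g_1-\sin(g_1+g_2)$ and $-\sin g_2-\sin(g_1+g_2)$ are strictly negative; hence there is no interior minimum and it suffices to check the three boundary edges. The edges $g_1=0$, $g_1=g_2$, and the binding constraint $2g_1+3g_2=2\pi$ each reduce to a one-variable problem whose minimum is the common corner $g_1=g_2=2\pi/5$, where $C=\frac{\sqrt5-3}{4}$. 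Thus the maximum triple-cost of any configuration is at least $\frac{\sqrt5-3}{4}$, matching the pentagon.

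I expect this lower bound to be the main obstacle. The averaging argument that sufficed in the even case only yields the weaker value $1+2\cos\frac{4\pi}{5}$, because it bounds the span $g_1+g_2\le4\pi/5$ but permits the degenerate unbalanced split $g_1\to0$, $g_2\to4\pi/5$, whose cost $1+2\cos\frac{4\pi}{5}$ lies strictly below $\frac{\sqrt5-3}{4}$. The essential new ingredient is recognizing that selecting the triple of \emph{minimal} two-gap span rules out this unbalanced split through the feasibility inequality $\max(g_1,g_2)\le2\pi-2(g_1+g_2)$; deriving this constraint and then exploiting it in the constrained minimization is the crux, and it is precisely what makes $N=5$ (like $N=3$) exceptional among odd $N$.
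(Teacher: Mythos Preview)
Your argument is correct and proceeds along a genuinely different line from the paper's proof. For $N=5$ the paper argues by local perturbation: it shows, through a sequence of case-by-case lemmas, that in any optimal configuration at least four of the five adjacent-$3$-angle sets must realise the maximum condition number, which forces the gap pattern $(\alpha,\beta,\alpha,\beta,\alpha)$ with $3\alpha+2\beta=2\pi$ and $\alpha\ge\beta$, and then minimises the common cost over this one-parameter family to land at $\alpha=\beta=2\pi/5$. Your route is more direct: you single out the consecutive triple of \emph{smallest} two-gap span, extract from minimality the feasibility inequality $\max(g_1,g_2)\le 2\pi-2(g_1+g_2)$, and minimise the cost of that one triple over the resulting triangle, finding the minimum at the corner $(2\pi/5,2\pi/5)$. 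This avoids the long elimination of ``$1$, $2$, or $3$ maximising sets'' entirely and gives a shorter self-contained lower bound; the price is that you do not recover the structural information (near-uniqueness of the optimiser) that the paper's perturbation analysis yields as a by-product.

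Two small points worth tightening: your hypothesis $g_i>0$ is unnecessary (coincident angles are allowed, and your inequalities go through with $g_i\ge 0$); and on the edge $2g_1+3g_2=2\pi$ you assert the minimum is at the corner without justification---it is worth recording that $C$ is concave along that edge (the second derivative in the edge parameter is negative throughout), so the single interior critical point is a maximum and the edge minimum is indeed at an endpoint.
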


\begin{proof}
The case for $N=3$ is trivial, so now we only focus on proving the claim for $N=5$.

For the set of angles $\hat{\theta}_i=2\theta_i=\frac{2\pi (i-1)}{N}$, $1 \leq i \leq  N$, it is not hard to check that $3$ adjacent angles, denoted by $r_1$, $r_2$ and $r_3$, give the maximum cost function
\begin{equation*}
 \cos(r_1-r_2)+\cos(r_1-r_3)+\cos(r_2-r_3)=2\cos(\frac{2\pi}{5})+\cos(\frac{4\pi}{5}),
 \end{equation*}
which corresponds to largest condition number.

Let $0\leq \theta_{i}^*<\pi$, $1\leq i \leq N$, be a set of $N$ angles which minimizes the maximum condition number of all submatrices with $3$ columns. For convenience, we consider the corresponding $N$ angles $\tilde{\theta}_i=2\theta_{i}^*$, $1\leq i \leq N$. Without loss of generality, we assume $\tilde{\theta}_1=0$; and $0\leq \tilde{\theta}_i <2\pi$ are arranged sequentially in a counter-clockwise order as $i$ ranges from $1$ to $N$. We first prove the following two lemmas before proving that there must exist at least $4$ adjacent-3-angle sets which give the maximum condition number.

\begin{lemma}
The counter-clockwise region between any two adjacent angles (for example $\tilde{\theta}_i$ and $\tilde{\theta}_{(i+1)\mod{N}}$ for some $i$ ) is smaller than $\pi$.
\end{lemma}

\begin{proof}
Note that the $5$ angles partition the circle into $5$ regions. If instead  the counter-clockwise region going from $\tilde{\theta}_i$ to $\tilde{\theta}_{(i+1)\mod{N}}$ is at least $\pi$, because the other $4$ regions occupy at most $\pi$, there must exist three adjacent angles for which the two counter-clockwise regions covered by them is no bigger than $\frac{\pi}{2}$. For those three angles, from the same calculation as in Theorem \ref{thm:K3NEven}, the smallest cost function these $3$ angles can achieve is
\begin{equation*}
 \cos(\frac{\pi}{2}-\frac{\pi}{2})+\cos(0-\frac{\pi}{2})+\cos(0-\frac{\pi}{2})=1,
 \end{equation*}
which is already bigger than the cost $2\cos(\frac{2\pi}{5})+\cos(\frac{4\pi}{5})$ achieved by the $5$ angles $\tilde{\theta}_i=2\theta_i=\frac{2\pi (i-1)}{N}$, $1 \leq i \leq  N$.
\end{proof}

\begin{lemma}
Let $N=5$. In the optimal design $\tilde{\theta}_i$, $1\leq i \leq N$, consider $4$ adjacent angles $r_1$, $r_2$, $r_3$ and $r_4$, where they are arranged in a counter-clockwise order; and $r_2$ and $r_3$ are inside the counter-clockwise region going from $r_1$ to $r_4$. $(r_1,r_2,r_3)$ and $(r_2,r_3,r_4)$ give the same condition number if and only if $(r_2-r_1)\mod(2\pi)=(r_4-r_3)\mod(2\pi)$.
\end{lemma}

\begin{proof}
Without loss of generality, we assume $r_1=0$ such that $r_i$, $1\leq i \leq 4$, are all within $[0, 2\pi)$. Now we only need to show $(r_1,r_2,r_3)$ and $(r_2,r_3,r_4)$ give the same condition number if and only if $r_2-r_1=r_4-r_3$.

If $(r_1,r_2,r_3)$ and $(r_2,r_3,r_4)$ give the same condition number, we have
\begin{eqnarray*}
  &~&\cos(r_1-r_2)+\cos(r_1-r_3)+\cos(r_2-r_3)\\
  &=&\cos(r_3-r_2)+\cos(r_4-r_2)+\cos(r_4-r_3).
\end{eqnarray*}

This means
\begin{eqnarray*}
  &~&2\cos(\frac{r_3-r_2}{2}+r_2-r_1) \cos(\frac{r_3-r_2}{2})\\
  &=&2\cos(\frac{r_3-r_2}{2}+r_4-r_3) \cos(\frac{r_3-r_2}{2}).
\end{eqnarray*}

Since we have just shown that $r_3-r_2$ is smaller than $\pi$, we have
\begin{eqnarray*}
\cos(\frac{r_3-r_2}{2}+r_2-r_1)=\cos(\frac{r_3-r_2}{2}+r_4-r_3).
\end{eqnarray*}

This means either $r_2-r_1=r_4-r_3$ or $r_4-r_1=2\pi$. The latter is not possible for a set of angles which achieve the smallest maximum condition number, because $r_4-r_1=2\pi$ forces the next angle $r_5$ to be aligned with both $r_1$ and $r_4$. This gives a condition number of $\infty$ for the three angles $r_1$, $r_4$ and $r_5$. So we must have $r_2-r_1=r_4-r_3$.
\end{proof}

In the optimal design $\tilde{\theta}_i$, $1\leq i \leq N$, we assume that $\{\tilde{\theta}_1, \tilde{\theta}_2,  \tilde{\theta}_3\}$ is a adjacent-$3$-set which corresponds to the maximum condition number.

\subsection{(At Least $2$ adjacent-$3$-angle Sets Giving the Maximum Condition Number) }

\begin{lemma}
$\tilde{\theta}_1$, $\tilde{\theta}_2$ and $\tilde{\theta}_3$ can not be the unique set of $3$ angles that have the largest condition number.
\label{lem:5singleoptimal}
\end{lemma}

\begin{proof}
We prove by contradiction. Suppose $\{\tilde{\theta}_1, \tilde{\theta}_2, \tilde{\theta}_3\}$ is the unique set of $3$ adjacent angles that have the largest condition number. Then we must have $\tilde{\theta}_3-\tilde{\theta}_1<\pi$. Suppose instead that $\tilde{\theta}_3-\tilde{\theta}_1 =\pi$ or $\tilde{\theta}_3-\tilde{\theta}_1 >\pi$.

If $\tilde{\theta}_3-\tilde{\theta}_1 =\pi$, the cost function for the set of $3$ angles $\tilde{\theta}_1$, $\tilde{\theta}_2$ and $\tilde{\theta}_3$ is equal to the cost function for the set of $3$ angles $\tilde{\theta}_3$, $\tilde{\theta}_5$ and $\tilde{\theta}_1$. This is a contradiction to our assumption.

If $\tilde{\theta}_3-\tilde{\theta}_1 >\pi$, then $\tilde{\theta}_2-\tilde{\theta}_1=\tilde{\theta}_3-\tilde{\theta}_2$. Otherwise, we can always shift $\tilde{\theta}_2$ towards $\frac{\tilde{\theta}_1+\tilde{\theta}_3}{2}$ by a sufficiently small amount and strictly decrease the cost function for $\tilde{\theta}_1$, $\tilde{\theta}_2$ and $\tilde{\theta}_3$. Since the cost functions for any other $3$ adjacent angles are strictly smaller than the original cost function of $\tilde{\theta}_1$, $\tilde{\theta}_2$ and $\tilde{\theta}_3$, their cost functions will remain smaller than the new revised cost function of $\tilde{\theta}_1$, $\tilde{\theta}_2$ and $\tilde{\theta}_3$. So we have just decreased the largest condition number, which is a contradiction. So we must have $\tilde{\theta}_2-\tilde{\theta}_1=\tilde{\theta}_3-\tilde{\theta}_2$. However the cost function for $\tilde{\theta}_4$, $\tilde{\theta}_5$ and $\tilde{\theta}_1$ is lower bounded by $2\cos(\frac{\pi}{2})+\cos(\pi)=-1$, which is larger than the cost function for $\tilde{\theta}_1$, $\tilde{\theta}_2$ and $\tilde{\theta}_3$.  This is contradictory to the assumption that the set of $\tilde{\theta}_1$, $\tilde{\theta}_2$ and $\tilde{\theta}_3$ corresponds to the maximum condition number.

 So we must have $\tilde{\theta}_3-\tilde{\theta}_1<\pi$. In this case, if we shift $\tilde{\theta}_3$ counter-clockwise by a sufficiently small amount $\delta$, we will strictly decrease the cost function for $\tilde{\theta}_1$, $\tilde{\theta}_2$ and $\tilde{\theta}_3$. Since by our assumption, the cost function of any other $3$ adjacent angles were \emph{strictly} smaller than the original cost function of $ \{\tilde{\theta}_1,\tilde{\theta}_2,\tilde{\theta}_3\}$, their cost functions will stay smaller than the new revised cost function for $ \{\tilde{\theta}_1,\tilde{\theta}_2,\tilde{\theta}_3\}$. So we have just strictly decreased the maximum condition number of the optimal design, which is not possible.
\end{proof}

\subsection{(At Least $3$ adjacent-$3$-angle Sets Giving the Maximum Condition Number) }

\begin{lemma}
$\{\tilde{\theta}_1, \tilde{\theta}_2, \tilde{\theta}_3\}$ and $\{\tilde{\theta}_2, \tilde{\theta}_3, \tilde{\theta}_4\}$  can not be the only $2$ sets of $3$ adjacent angles which correspond to the maximum condition number.
\label{lemma:adjacenttwo}
\end{lemma}

\begin{proof}
We prove by contradiction. Suppose $\{\tilde{\theta}_1, \tilde{\theta}_2, \tilde{\theta}_3\}$ and $\{\tilde{\theta}_2, \tilde{\theta}_3, \tilde{\theta}_4\}$ are the only two sets of $3$ adjacent angles that have the largest condition number.
This means $\tilde{\theta}_4-\tilde{\theta}_3=\tilde{\theta}_2-\tilde{\theta}_1=\alpha$ for some $\alpha \geq 0$; and $\tilde{\theta}_2-\tilde{\theta}_1=\beta$ for some $\beta\geq 0$. Note that  $2\alpha+\beta<2\pi$ because, otherwise, $\tilde{\theta}_4$, $\tilde{\theta}_5$ and $\tilde{\theta}_1$ are forced to be in the same position, giving rise to a condition number of $\infty$ for these three angles.

From Lemma \ref{lemma:lessthanpi}, we know $\beta<\pi$. For such a $\beta$, it is not hard to check that under the constraint $2\alpha+\beta \leq 2\pi$, the cost function $\cos(\alpha)+\cos(\beta)+cos(\alpha+\beta)$ achieves its unique minimum when $2\alpha+\beta=2\pi$. Moreover, the cost function is a strictly decreasing function as $\alpha$ grows from $0$ to $\pi-\frac{\beta}{2}$. So if we shift $\tilde{\theta}_4$ counter-clockwise by a small amount $\delta>0$ and shift $\tilde{\theta}_1$ clockwise by the same small amount $\delta>0$, then as long as $2\alpha+\beta<2\pi$, this will strictly decrease the condition numbers simultaneously for $\{\tilde{\theta}_1, \tilde{\theta}_2, \tilde{\theta}_3\}$ and $\{\tilde{\theta}_2, \tilde{\theta}_3, \tilde{\theta}_4\}$.

 Since the cost functions for any other $3$ adjacent angles were strictly smaller than the original cost function of $\{\tilde{\theta}_1, \tilde{\theta}_2, \tilde{\theta}_3\}$, their cost functions will stay smaller than the new revised cost function of $\{\tilde{\theta}_1, \tilde{\theta}_2, \tilde{\theta}_3\}$. So we have just strictly decreased the maximum condition number, which is a contradiction to our assumption of an optimal design.
\end{proof}

By symmetry, in the same spirit, we have
\begin{lemma}
$\{\tilde{\theta}_1, \tilde{\theta}_2, \tilde{\theta}_3\}$ and $\{\tilde{\theta}_1, \tilde{\theta}_2, \tilde{\theta}_5\}$  can not be the only $2$ sets of $3$ adjacent angles which have the largest condition number.
\end{lemma}

 We can also prove:
\begin{lemma}
$\{\tilde{\theta}_1, \tilde{\theta}_2, \tilde{\theta}_3\}$ and $\{\tilde{\theta}_3, \tilde{\theta}_4, \tilde{\theta}_5\}$  can not be the only $2$ sets of $3$ adjacent angles which correspond to the maximum condition number.
\label{lemma:twoanglesetnotpossible}
\end{lemma}

\begin{proof}
Again, we prove by contradiction. Suppose $\{\tilde{\theta}_1, \tilde{\theta}_2, \tilde{\theta}_3\}$ and $\{\tilde{\theta}_3, \tilde{\theta}_4, \tilde{\theta}_5\}$ are the only two sets of $3$ adjacent angles that have the largest condition number.

We first assume that $\tilde{\theta}_5 -\tilde{\theta}_3 \neq \pi$.

We claim that if $\tilde{\theta}_5-\tilde{\theta}_3>\pi$, then $\tilde{\theta}_4=\frac{\tilde{\theta}_3+\tilde{\theta}_5}{2}$. This is because otherwise, we can shift $\tilde{\theta}_4$ toward the middle point $\frac{\tilde{\theta}_3+\tilde{\theta}_5}{2}$ by a sufficiently small amount, thus strictly decreasing the condition number for $\{\tilde{\theta}_3, \tilde{\theta}_4, \tilde{\theta}_5\}$. This will leave $\{\tilde{\theta}_1, \tilde{\theta}_2, \tilde{\theta}_3\}$ as the unique adjacent-$3$-set with the maximum condition number, which is not possible by Lemma \ref{lem:5singleoptimal}.

But if $\tilde{\theta}_5-\tilde{\theta}_3>\pi$, we must have $\tilde{\theta}_3-\tilde{\theta}_1<\pi$. However, then $\{\tilde{\theta}_1, \tilde{\theta}_2, \tilde{\theta}_3\}$ and $\{\tilde{\theta}_3, \tilde{\theta}_4, \tilde{\theta}_5\}$ can not have the same condition number. In fact, from analyzing the cost function, when $\tilde{\theta}_3-\tilde{\theta}_1<\pi$, $\tilde{\theta}_5-\tilde{\theta}_3>\pi$ and $\tilde{\theta}_4=\frac{\tilde{\theta}_3+\tilde{\theta}_5}{2}$,        $\{\tilde{\theta}_3, \tilde{\theta}_4, \tilde{\theta}_5\}$ has a strictly smaller condition number than $\{\tilde{\theta}_1, \tilde{\theta}_2, \tilde{\theta}_3\}$.

So when $\tilde{\theta}_5 -\tilde{\theta}_3 \neq \pi$,  we must have $\tilde{\theta}_5-\tilde{\theta}_3<\pi$ and, symmetrically, $\tilde{\theta}_3-\tilde{\theta}_1<\pi$. Then $\tilde{\theta}_4=\tilde{\theta}_3$ or $\tilde{\theta}_5$; $\tilde{\theta}_2=\tilde{\theta}_1$ or $\tilde{\theta}_3$; and
$\tilde{\theta}_3-\tilde{\theta}_1=\tilde{\theta}_5-\tilde{\theta}_3$.  This is because, if $\tilde{\theta}_4\neq \tilde{\theta}_3$ and $\tilde{\theta}_4\neq \tilde{\theta}_5$, we can always shift $\tilde{\theta}_4$ towards whatever is closer to $\tilde{\theta}_4$ among $\tilde{\theta}_3$ and $\tilde{\theta}_5$. This will strictly decrease the corresponding cost function, and leaving only one adjacent-$3$-angle set having the maximum condition number, which is not possible by Lemma \ref{lem:5singleoptimal}.

But then by increasing $\tilde{\theta}_3-\tilde{\theta}_1=\tilde{\theta}_5-\tilde{\theta}_3$ by a sufficiently small amount $\delta>0$, we will strictly decrease the condition numbers for $\{\tilde{\theta}_1, \tilde{\theta}_2, \tilde{\theta}_3\}$ and $\{\tilde{\theta}_3, \tilde{\theta}_4, \tilde{\theta}_5\}$. Since the cost functions for the other sets of $3$ adjacent angles were strictly smaller than the original cost function of $\{\tilde{\theta}_1, \tilde{\theta}_2, \tilde{\theta}_3\}$ and $\{\tilde{\theta}_3, \tilde{\theta}_4, \tilde{\theta}_5\}$, their cost functions will remain smaller than the new revised cost function of $\{\tilde{\theta}_3, \tilde{\theta}_4, \tilde{\theta}_5\}$. So we have just decreased the maximum condition number of the optimal design, which is not possible.

 We now consider the possibility that $\tilde{\theta}_5 - \tilde{\theta}_3 = \pi$. Because $\{\tilde{\theta}_3, \tilde{\theta}_4, \tilde{\theta}_5\}$ and $\{\tilde{\theta}_1, \tilde{\theta}_2, \tilde{\theta}_3\}$ both have the maximum condition number; and
  $(\tilde{\theta}_3-\tilde{\theta}_1)+(\tilde{\theta}_5-\tilde{\theta}_3)\leq 2\pi$, with the cost function $\cos(\alpha)+\cos(\beta)+\cos(\alpha+\beta)$ achieving the minimum $-1$ when $\alpha+\beta\leq \pi$  we must have $\tilde{\theta}_3-\tilde{\theta}_1=\pi$ too. Then $\tilde{\theta}_5$ and $\tilde{\theta}_1$ must be in the same position, and so $\{\tilde{\theta}_1, \tilde{\theta}_2, \tilde{\theta}_5\}$ must have a condition number no smaller than $\{\tilde{\theta}_1, \tilde{\theta}_2, \tilde{\theta}_3\}$ since $\cos(\tilde{\theta}_2-\tilde{\theta}_1)+\cos(\tilde{\theta}_1-\tilde{\theta}_5+2\pi)+\cos(\tilde{\theta}_2-\tilde{\theta}_5+2\pi)$ achieves its minimum $-1$ with $\tilde{\theta}_2=\pi$ when $\tilde{\theta}_2 \leq \pi$. This is contradictory to our assumption that $\{\tilde{\theta}_1, \tilde{\theta}_2, \tilde{\theta}_3\}$ and $\{\tilde{\theta}_3, \tilde{\theta}_4, \tilde{\theta}_5\}$ are the only $2$ sets of $3$ adjacent angles which have the maximum condition number.
 \end{proof}

By symmetry, we can also prove
\begin{lemma}
$\{\tilde{\theta}_1, \tilde{\theta}_2, \tilde{\theta}_3\}$ and $\{\tilde{\theta}_4, \tilde{\theta}_5,\tilde{\theta}_1\}$  can not be the only $2$ sets of $3$ adjacent angles which have the largest condition number.
\end{lemma}

\subsection{(At Least $4$ adjacent-$3$-angle Sets Giving the Maximum Condition Number) }

Now we consider the cases where more angle sets have the maximum condition number.
\begin{lemma}
$\{\tilde{\theta}_1, \tilde{\theta}_2, \tilde{\theta}_3\}$, $\{\tilde{\theta}_2, \tilde{\theta}_3, \tilde{\theta}_4\}$ and $\{\tilde{\theta}_3, \tilde{\theta}_4, \tilde{\theta}_5\}$ can not be the only $3$ sets of $3$ adjacent angles which have the largest condition number.
\end{lemma}

\begin{proof}
We prove by contradiction. Let us assume $\{\tilde{\theta}_1, \tilde{\theta}_2, \tilde{\theta}_3\}$, $\{\tilde{\theta}_2, \tilde{\theta}_3, \tilde{\theta}_4\}$ and $\{\tilde{\theta}_3, \tilde{\theta}_4, \tilde{\theta}_5\}$ are the only $3$ sets of $3$ adjacent angles which have the largest condition number. Apparently, $\tilde{\theta}_2-\tilde{\theta}_1=\tilde{\theta}_4-\tilde{\theta}_3=\alpha$  and $\tilde{\theta}_3-\tilde{\theta}_2=\tilde{\theta}_5-\tilde{\theta}_4=\beta$ for some $\alpha \geq 0$ and $\beta\geq 0$.

We must have $\alpha+\beta<\pi$. Otherwise, angle $\tilde{\theta}_5$ will be in the same position as $\tilde{\theta}_1$. But, as argued in Lemma \ref{lemma:twoanglesetnotpossible}, this implies $\{\tilde{\theta}_5, \tilde{\theta}_1, \tilde{\theta}_2\}$ can not have a smaller condition number than $\{\tilde{\theta}_1, \tilde{\theta}_2, \tilde{\theta}_3\}$, which is a contradiction to the assumption that $\{\tilde{\theta}_1, \tilde{\theta}_2, \tilde{\theta}_3\}$, $\{\tilde{\theta}_2, \tilde{\theta}_3, \tilde{\theta}_4\}$ and $\{\tilde{\theta}_3, \tilde{\theta}_4, \tilde{\theta}_5\}$ are the only $3$ sets of $3$ adjacent angles which have the largest condition number.

So we can always increase $\alpha$ and $\beta$ by a sufficiently small amount $\delta$ to decrease the condition number for $\{\tilde{\theta}_1, \tilde{\theta}_2, \tilde{\theta}_3\}$, $\{\tilde{\theta}_2, \tilde{\theta}_3, \tilde{\theta}_4\}$ and $\{\tilde{\theta}_3, \tilde{\theta}_4, \tilde{\theta}_5\}$. Since the cost functions for any other $3$ adjacent angles were strictly smaller than the original cost function of $\{\tilde{\theta}_1, \tilde{\theta}_2, \tilde{\theta}_3\}$, $\{\tilde{\theta}_2, \tilde{\theta}_3, \tilde{\theta}_4\}$ and $\{\tilde{\theta}_3, \tilde{\theta}_4, \tilde{\theta}_5\}$, their cost functions will remain smaller than the new revised cost function of $\tilde{\theta}_1$, $\tilde{\theta}_2$ and $\tilde{\theta}_3$. So we have just decreased the maximum condition number, which is a contradiction to our assumption.
\end{proof}

We also have:
\begin{lemma}
$\{\tilde{\theta}_1, \tilde{\theta}_2, \tilde{\theta}_3\}$, $\{\tilde{\theta}_2, \tilde{\theta}_3, \tilde{\theta}_4\}$ and $\{\tilde{\theta}_4, \tilde{\theta}_5, \tilde{\theta}_1\}$ can not be the only $3$ sets of $3$ adjacent angles which have the largest condition number.
\end{lemma}

\begin{proof}
Again, we prove by contradiction. Suppose that $\{\tilde{\theta}_1, \tilde{\theta}_2, \tilde{\theta}_3\}$, $\{\tilde{\theta}_2, \tilde{\theta}_3, \tilde{\theta}_4\}$ and $\{\tilde{\theta}_4, \tilde{\theta}_5, \tilde{\theta}_1\}$ are the only $3$ sets of $3$ adjacent angles which have the largest condition number.

Firstly, we assume that the counter-clockwise region between angle $\tilde{\theta}_4$ and angle $\tilde{\theta}_1$  is smaller than $\pi$. Then we know angle $\tilde{\theta}_5$  must be in the same position as angle $\tilde{\theta}_4$  or angle $\tilde{\theta}_1$. Otherwise, as we discussed earlier, we can always shift $\tilde{\theta}_5$ such that the cost function for $\{\tilde{\theta}_4, \tilde{\theta}_5, \tilde{\theta}_1\}$ is decreased, which will reduce us to the scenario in Lemma \ref{lemma:adjacenttwo}.

 Suppose $\tilde{\theta}_5$ is in the same position as angle $\tilde{\theta}_4$. From our assumption, the cost function for  $\{\tilde{\theta}_3, \tilde{\theta}_4, \tilde{\theta}_5\}$ is smaller than the cost function for $\{\tilde{\theta}_2, \tilde{\theta}_3, \tilde{\theta}_4\}$. This is not possible, because $\tilde{\theta}_4-\tilde{\theta}_3 <\pi$, and $\tilde{\theta}_5-\tilde{\theta}_4=0$ and the cost function for $\{\tilde{\theta}_2, \tilde{\theta}_3, \tilde{\theta}_4\}$ is maximized when $\tilde{\theta}_2=\tilde{\theta}_3$ under the condition $\tilde{\theta}_2\leq \tilde{\theta}_3$.

 By symmetry of $\tilde{\theta}_5$ with respect to $\tilde{\theta}_4$ and $\tilde{\theta}_1$, when $\tilde{\theta}_5$  is in the same position as $\tilde{\theta}_1$ , we also get a contradiction.

Secondly, we assume that the counter-clockwise region going from $\tilde{\theta}_4$ to $\tilde{\theta}_1$  is equal to $\pi$. In this case, the cost function for $\{\tilde{\theta}_5, \tilde{\theta}_1,\tilde{\theta}_4\}$ does not depend on the location of angle $\tilde{\theta}_5$. Since the cost function for $\{\tilde{\theta}_4, \tilde{\theta}_5,\tilde{\theta}_1\}$ is $-1$ and $\tilde{\theta}_4-\tilde{\theta}_1$, in order for  $\{\tilde{\theta}_1, \tilde{\theta}_2, \tilde{\theta}_3\}$, $\{\tilde{\theta}_2, \tilde{\theta}_3, \tilde{\theta}_4\}$ and $\{\tilde{\theta}_4, \tilde{\theta}_5, \tilde{\theta}_1\}$ to have the same cost function, we must have $\tilde{\theta}_2=\tilde{\theta}_1=0$ and $\tilde{\theta}_3=\tilde{\theta}_4=\pi$. This is contradictory to the assumption that $\{\tilde{\theta}_2, \tilde{\theta}_3,\tilde{\theta}_4\}$ has a larger cost function than $\{\tilde{\theta}_3, \tilde{\theta}_4,\tilde{\theta}_5\}$.

Thirdly, we assume that the counter-clockwise region going from $\tilde{\theta}_4$ to $\tilde{\theta}_1$ is larger than $\pi$. Similar to earlier analysis for the case that  , $\tilde{\theta}_5$ must at the middle point of the counter-clockwise region going from $\tilde{\theta}_4$ to $\tilde{\theta}_1$. However, we get a contradiction because the cost function for $\{\tilde{\theta}_4, \tilde{\theta}_5,\tilde{\theta}_1\}$ is no bigger than $-1$; while the cost function for $\{\tilde{\theta}_1, \tilde{\theta}_2,\tilde{\theta}_3\}$ is bigger than $-1$ since $\tilde{\theta}_3-\tilde{\theta}_1<\pi$.

So in summary, we have proven this lemma.
 \end{proof}

In the same spirit, we can prove
\begin{lemma}
$\{\tilde{\theta}_1, \tilde{\theta}_2, \tilde{\theta}_3\}$, $\{\tilde{\theta}_3, \tilde{\theta}_4, \tilde{\theta}_5\}$ and $\{\tilde{\theta}_5, \tilde{\theta}_1, \tilde{\theta}_2\}$ can not be the only $3$ sets of $3$ adjacent angles which have the largest condition number.
\end{lemma}

\begin{lemma}
$\{\tilde{\theta}_1, \tilde{\theta}_2, \tilde{\theta}_3\}$, $\{\tilde{\theta}_3, \tilde{\theta}_4, \tilde{\theta}_5\}$ and $\{\tilde{\theta}_4, \tilde{\theta}_5, \tilde{\theta}_1\}$ can not be the only $3$ sets of $3$ adjacent angles which have the largest condition number.
\end{lemma}

So the only left four possibilities are
\begin{itemize}

\item  $\{\tilde{\theta}_1, \tilde{\theta}_2, \tilde{\theta}_3\}$, $\{\tilde{\theta}_2, \tilde{\theta}_3, \tilde{\theta}_4\}$,  $\{\tilde{\theta}_3, \tilde{\theta}_4, \tilde{\theta}_5\}$,  and $\{\tilde{\theta}_4, \tilde{\theta}_5, \tilde{\theta}_1\}$ are the sets of $3$ adjacent angles which have the largest condition number.

\item $\{\tilde{\theta}_1, \tilde{\theta}_2, \tilde{\theta}_3\}$, $\{\tilde{\theta}_2, \tilde{\theta}_1, \tilde{\theta}_5\}$,  $\{\tilde{\theta}_1, \tilde{\theta}_5, \tilde{\theta}_4\}$,  and $\{\tilde{\theta}_5, \tilde{\theta}_4, \tilde{\theta}_3\}$ are the sets of $3$ adjacent angles which have the largest condition number.

\item $\{\tilde{\theta}_1, \tilde{\theta}_2, \tilde{\theta}_3\}$, $\{\tilde{\theta}_2, \tilde{\theta}_3, \tilde{\theta}_4\}$,  $\{\tilde{\theta}_3, \tilde{\theta}_4, \tilde{\theta}_5\}$,  and $\{\tilde{\theta}_5, \tilde{\theta}_1, \tilde{\theta}_2\}$ are the sets of $3$ adjacent angles which have the largest condition number.

\item $\{\tilde{\theta}_1, \tilde{\theta}_2, \tilde{\theta}_3\}$, $\{\tilde{\theta}_2, \tilde{\theta}_3, \tilde{\theta}_4\}$,  $\{\tilde{\theta}_4, \tilde{\theta}_5, \tilde{\theta}_1\}$,  and $\{\tilde{\theta}_5, \tilde{\theta}_1, \tilde{\theta}_2\}$ are the sets of $3$ adjacent angles which have the largest condition number.
\end{itemize}

These four cases are symmetric to each other, so we consider the first case and the conclusion carries over to the other three cases accordingly.

The first case implies that $\tilde{\theta}_2-\tilde{\theta}_1=\tilde{\theta}_4-\tilde{\theta}_3=\tilde{\theta}_1-\tilde{\theta}_5+2\pi=\alpha$ for some constant $\alpha \geq 0$; and $\tilde{\theta}_3-\tilde{\theta}_2=\tilde{\theta}_5-\tilde{\theta}_4=\beta$ for some constant $\beta\geq 0$.

 We note that $\{\tilde{\theta}_1, \tilde{\theta}_2, \tilde{\theta}_5\}$ are adjacent $3$ angles with $\alpha$ between $\tilde{\theta}_1$ and $\tilde{\theta}_2$; and $\alpha$ between $\{\tilde{\theta}_1$ and $\tilde{\theta}_5\}$. We also notice that $\alpha \geq \beta \geq 0$ (because $\alpha+\beta \leq \pi$ and $2\alpha+\beta \leq 2\pi$. Under these constraints, it is not hard to check that the cost function $\cos(\alpha)+\cos(\alpha)+\cos(2\alpha)$ for $\{\tilde{\theta}_5, \tilde{\theta}_1, \tilde{\theta}_2\}$ is bigger than the cost function $\cos(\alpha)+\cos(\beta)+\cos(\alpha+\beta)$ for $\{\tilde{\theta}_1, \tilde{\theta}_2, \tilde{\theta}_3\}$ if and only if  $\alpha \geq \beta$. ) and $3\alpha+2\beta=2\pi$. In order to minimize the largest condition number, we should make the cost function $\cos(\alpha)+\cos(\beta)+\cos(\alpha+\beta)$ as small as possible.

 Under the constraints that $\alpha \geq \beta \geq 0$ and $3\alpha+2\beta=2\pi$, we have $\frac{2\pi}{5}\leq \alpha \leq \frac{2\pi}{3}$. Within this range, the cost function $\cos(\alpha)+\cos(\beta)+\cos(\alpha+\beta)$ achieves its minimum when $\alpha=\beta=\frac{2\pi}{5}$. If $\alpha=\beta$, the cost function is equal to $2\cos(\frac{2\pi}{5})+\cos(\frac{4\pi}{5})\approx-0.1910$.

 So indeed the optimal solution is given by $\theta_i=\frac{\pi (i-1)}{N}$, $1 \leq i \leq  N$.
\end{proof}

\section{$K=3$, $N=7$}
One might think that the uniform distributed design is optimal for $N=7$

\begin{theorem}
Let $K=3$ and $N =7$ . Then $\theta_i=\frac{2\pi (i-1)}{N+1} \mod \pi$, $1 \leq i \leq  N$, minimizes the maximum condition number among all sub-matrices with $K=3$ columns.
\label{thm:K3N7}
\end{theorem}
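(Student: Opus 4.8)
The plan is to keep working with the doubled angles $\tilde\theta_i=2\theta_i$ and the per-triple cost $f(r_1,r_2,r_3)=\cos(r_1-r_2)+\cos(r_1-r_3)+\cos(r_2-r_3)$, recalling from the identity $\mathrm{den}^2=3+2f$ (the case $K=3$ of the quantity defined above) that a triple of unit vectors $\mathbf v_a,\mathbf v_b,\mathbf v_c$ pointing along the doubled angles satisfies $f\ge 1$ if and only if $\|\mathbf v_a+\mathbf v_b+\mathbf v_c\|^2\ge 5$. For the proposed design the doubled angles $\tilde\theta_i=\tfrac{\pi(i-1)}{2}$ reduce, modulo $2\pi$, to the four directions $0,\tfrac\pi2,\pi,\tfrac{3\pi}{2}$ with multiplicities $2,2,2,1$; these are exactly the doubled angles of the even design of Theorem \ref{thm:K3NEven} taken with its $N$ equal to $N+1=8$, but with one of the four repeated directions deleted. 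I would first dispose of \emph{achievability}: since only four distinct directions occur, the maximum of $f$ over all triples is a finite check, and it equals $1$, attained for instance by two vectors at $0$ together with one at $\tfrac\pi2$, where $f=1+2\cos\tfrac\pi2=1$.

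The substance is the matching \emph{lower bound}: every configuration of $7$ angles has maximum cost at least $1$. Here I would reduce to the already-proved even case. Theorem \ref{thm:K3NEven} with $N=8$ gives that \emph{every} configuration of $8$ doubled angles has maximum triple-cost at least $1+2\cos\tfrac{4\pi}{8}=1$. Hence it suffices to prove the following augmentation lemma: if a set of $7$ unit vectors has every triple-cost strictly below $1$, then one can adjoin an eighth unit vector $\mathbf v_8$ so that every new triple $\{\mathbf v_8,\mathbf v_i,\mathbf v_j\}$ also has cost below $1$; the resulting $8$-configuration would then contradict the $N=8$ bound, so no $7$-configuration beats $1$. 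Adding $\mathbf v_8$ creates only the triples through it, and $f(\mathbf v_8,\mathbf v_i,\mathbf v_j)<1$ is the single linear condition $\mathbf v_8\cdot(\mathbf v_i+\mathbf v_j)<1-\cos\delta_{ij}$, where $\delta_{ij}$ is the angle between $\mathbf v_i$ and $\mathbf v_j$. Each such condition forbids $\mathbf v_8$ from a circular cap of directions, and a short computation shows the cap is nonempty only for \emph{close} pairs, namely $\delta_{ij}\le 2\arccos\tfrac{\sqrt5-1}{2}$, with the widest caps, approaching a half-circle, coming from near-coincident pairs.

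The crux, and the step I expect to be hardest, is showing these forbidden caps cannot cover the whole direction circle under the hypothesis that all original triples cost below $1$. I would exploit how restrictive that hypothesis is. If $\mathbf v_i\approx\mathbf v_j$, then $f(\mathbf v_i,\mathbf v_j,\mathbf v_k)=1+2\cos\angle(\mathbf v_i,\mathbf v_k)<1$ forces \emph{every} remaining vector to lie more than $\tfrac\pi2$ from that common direction; consequently the directions of any two near-coincident pairs are themselves more than $\tfrac\pi2$ apart, so there are at most two such pairs and they cannot be antipodal, since antipodal pairs would leave the other three vectors with no admissible position. Casing on the number of near-coincident pairs, I would show the union of the wide caps spans an arc of length strictly less than $2\pi$ and that the narrow caps, whose centres are the bisectors of the remaining pairs, all fall inside the already occupied region; the complementary arc then furnishes an admissible direction for $\mathbf v_8$. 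Concretely, placing $\mathbf v_8$ in the largest empty arc of the $7$ doubled angles, equivalently opposite the bulk of the configuration, should satisfy all the constraints simultaneously.

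As a fallback independent of the cap bookkeeping, the lower bound can instead be established by the perturbation-and-rigidity method already used for $N=5$: assume an optimal configuration, observe that any isolated cost-maximising triple can be strictly relaxed by nudging its most spread-apart vector, which lowers the global maximum and contradicts optimality, and push this to force the doubled angles to collapse onto the four equally spaced directions of the $N+1=8$ pattern, where the maximum cost is exactly $1$. Either way, the achievability check and the appeal to Theorem \ref{thm:K3NEven} are routine; the delicate point is the covering/feasibility argument that guarantees room for the eighth vector.
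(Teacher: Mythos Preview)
Your achievability check and the reduction to the triple cost $f$ match the paper. Your primary lower-bound route---augment a hypothetical seven-point configuration with all triple costs below $1$ to an eight-point one and invoke the even case---is genuinely different from the paper's argument, which never augments. Instead the paper works directly with an optimal seven-point configuration: it reduces to adjacent-$3$-angle sets, proves via a sequence of perturbation lemmas that in an optimal design at most one such set can have cost strictly below the maximum, deduces that the consecutive gaps between doubled angles must alternate between two values $\alpha\ge\beta$ with $4\alpha+3\beta=2\pi$, and finally minimises $\cos\alpha+\cos\beta+\cos(\alpha+\beta)$ along that line to obtain $\beta=0$, $\alpha=\pi/2$.

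The gap in your proposal is the augmentation lemma itself. Your covering sketch treats only \emph{near-coincident} pairs as the source of wide forbidden caps, but by your own computation the cap is nonempty whenever $\delta_{ij}\le 2\arccos\tfrac{\sqrt5-1}{2}\approx 103.6^\circ$, which is not a small threshold: in a seven-point configuration with no coincidences every adjacent pair, and often second-neighbour pairs as well, is ``close'' in this sense. For a gap of $\pi/2$ the cap already has half-width $\pi/4$, and for smaller gaps it is wider still. Your assertion that the ``narrow caps all fall inside the already occupied region'' determined by near-coincident pairs has no content when there are no near-coincident pairs, and nothing in the sketch controls the union of the many moderate-width caps that then remain. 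The heuristic ``place $\mathbf v_8$ in the largest empty arc'' is not justified, and it is not clear it can be justified without first pinning down the structure of an optimal seven-point configuration---which is precisely what the paper's direct analysis does. Your fallback \emph{is} the paper's method, but that is where all the content lies; the chain of lemmas forcing the alternating gap pattern is the substance of the proof and is not a routine consequence of the $N=5$ argument.
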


\begin{proof}
%
 Among $\hat{\theta}_i=2\theta_i=\frac{4\pi (i-1)}{N+1} \mod 2\pi$, $1 \leq i \leq  N$, it is not hard to check that $3$ adjacent angles, denoted by $r_1$, $r_2$ and $r_3$ with $r_1=r_2$, give the maximum cost function
\begin{equation*}
 \cos(r_1-r_2)+\cos(r_1-r_3)+\cos(r_2-r_3)=2\cos(\frac{4\pi}{N+1})+1.
 \end{equation*}
This means the submatrix corresponding to such $3$ adjacent angles generate the maximum condition number among all possible $3$-column submatrices.

So in order to prove that $\hat{\theta}_i=2\theta_i=\frac{4\pi (i-1)}{N+1} \mod 2\pi$, $1 \leq i \leq N$, minimizes the maximum condition number among all $3$-column submatrices, it is enough to show $\hat{\theta}_i=2\theta_i=\frac{4\pi (i-1)}{N+1} \mod 2\pi$, $1 \leq i \leq  N$, minimizes the maximum condition number among all the adjacent-$3$-angle sets. Let us assume that $N$ angles $0\leq \tilde{\theta}=2\theta_i^* < 2\pi$, $1\leq i \leq N$ achieves the smallest maximum condition number among all the adjacent-$3$-angle sets. Without sacrificing generality, let $\tilde{\theta}_1=0$ and $\tilde{\theta}_i$ be arranged sequentially in a counter-clockwise order as $i$ goes from $1$ to $N$.

\begin{lemma}
In the optimal design $\tilde{\theta}_i$, $1\leq i \leq N$, the counter-clockwise region covered by any three adjacent angles is no bigger than $\pi$ for $N=7$; and smaller than $\pi$ for $N\geq 9$.  The only scenario where the counter-clockwise region covered by one adjacent-$3$-angle set is $\pi$ is when $N=7$ and the $7$ angles are respectively $0$, $0$, $\pi/2$, $\pi/2$, $\pi$,$\pi$ and $\frac{3\pi}{2}$ (up to rotations of these angles).
\label{lemma:lessthanpi}
\end{lemma}
\begin{proof}
Suppose instead in the optimal design, the counter-clockwise region covered by some three adjacent angles $r_1$, $r_2$ and $r_3$ is larger than $\pi$. Then there must exist $3$ adjacent angles for which the counter-clockwise region covered by them is smaller than $\frac{3\pi}{N-1}\leq \frac{\pi}{2}$ because the sum of the counter-clockwise regions covered by all the $3$ adjacent angles is $4\pi$ (Please see the proof in Theorem \ref{thm:K3NEven}). This means that the cost function for $r_1$, $r_2$ and $r_3$ is larger than $2\cos(\frac{3\pi}{N-1})+1$ (when $r_2$ is aligned with $r_1$ or $r_3$). Note that, $2\cos(\frac{3\pi}{N-1})+1$ is equal to the maximum cost function $2\cos(\frac{4\pi}{N+1})+1$ given by the to-be-proven optimal design $\hat{\theta}_i=2\theta_i=\frac{4\pi (i-1)}{N+1} \mod 2\pi$, $1 \leq i \leq  N$, when $N=7$; and is bigger than $2\cos(\frac{4\pi}{N+1})+1$ when $N\geq 9$. This is contradictory to our optimal design. So in the optimal design $\hat{\theta}_i=2\theta_i=\frac{4\pi (i-1)}{N+1} \mod 2\pi$, the counter-clockwise region covered by any three adjacent angles is no bigger than $\pi$ for $N\geq 7$.

When $N=7$ and the counter-clockwise region covered by one adjacent-$3$-angle set is $\pi$, the counter-clockwise region covered by each one of the other adjacent-$3$-angle sets is forced  to be $\frac{pi}{2}$, The only way for that to happen is that the $7$ angles $\tilde{\theta}_i$, $1\leq i \leq N$, are $0$, $0$, $\pi/2$, $\pi/2$, $\pi$,$\pi$ and $\frac{3\pi}{2}$ (up to rotations of these angles).

Note that the same argument can show that the counter-clockwise region covered by any three adjacent angles is smaller than $\pi$ for $N\geq 9$.
 \end{proof}

In the optimal design $\tilde{\theta}_i$, $1\leq i \leq N$, there are $N$ sets of $3$ adjacent angles, and we denote each set by its counter-clockwise central angle. For example, we denote the set of three angles $\{\tilde{\theta}_{(j-1)\mod N}, \tilde{\theta}_{j}, \tilde{\theta}_{(j+1)\mod N}\}$ for some $1\leq j \leq N$ as its central angle $\{\tilde{\theta}_{j}\}$. We assume that $\tilde{\theta}_1$, $\tilde{\theta}_2$ and $\tilde{\theta}_3$ are three angles which correspond to the largest condition number.

We now prove the following lemma:
\begin{lemma}
In the optimal design  $\tilde{\theta}_i$, $1\leq i \leq N$, $N\geq 7$, there do not exist $\geq 2$ consecutive adjacent-$3$-angle sets (for example $\{\tilde{\theta}_{j}\}$ and $\{\tilde{\theta}_{(j+1)\mod N}\}$ for some $1\leq j \leq N$) which have smaller condition numbers than the maximum condition number.
\label{lem:notwoconsecutivebigger}
\end{lemma}

\begin{proof}
We prove by contradiction. Suppose that for some $j$, $\{\tilde{\theta}_{j}\}$ and $\{\tilde{\theta}_{(j+1)\mod N}\}$ both have smaller condition numbers than the maximum condition number. We also assume that one of $\{\tilde{\theta}_{(j-1)\mod N}\}$ and $\{\tilde{\theta}_{(j+2)\mod N}\}$ is an adjacent-$3$-angle set corresponding to the maximum condition number. Notice that we can always find such a $j$ if there exist $\geq 2$ consecutive adjacent-$3$-angle sets which have smaller condition numbers than the maximum condition number.

By Lemma \ref{lemma:lessthanpi}, any adjacent angle widths $\gamma_1$ and $\gamma_2 $ satisfy $\gamma_1+\gamma_2 \leq \pi$. The cost function for $\cos(\gamma_1)+\cos(\gamma_2)+\cos(\gamma_1+\gamma_2)$ strictly decreases if we increase $\gamma_1$ and $\gamma_2$ simultaneously by a sufficiently small amount.

Suppose that $\{\tilde{\theta}_{j}\}$ spans two regions with counter-clockwise angle width $\alpha \geq 0$ and $\beta \geq 0$; and that $\{\tilde{\theta}_{(j+1)\mod N}\}$ spans two regions with counter-clockwise angle width $\beta$ and $\gamma$.

If $\beta>0$, we can always reduce $\beta$ by a sufficiently small enough amount and increase every region involved in all the adjacent-$3$-angle sets corresponding to the maximum condition number by an appropriate small amount such that the angle widths of the $N$ regions still sum up to $2\pi$. In this way, we have just strictly decreased the maximum condition number among all the adjacent-$3$-angle sets. This is contradictory to the optimal design assumption.

If $\beta=0$, since every two adjacent angles are no more than $\pi$ apart, $\{\tilde{\theta}_{(j-1)\mod N}\}$ has no bigger condition number than $\{\tilde{\theta}_{j}\}$ and $\{\tilde{\theta}_{(j+2)\mod N}\}$ has no bigger condition number than $\{\tilde{\theta}_{(j+1)\mod N}\}$. This is contradictory to the assumption that $\{\tilde{\theta}_{j}\}$ and $\{\tilde{\theta}_{(j+1)\mod N}\}$ both have smaller condition numbers than the maximum condition number; and one of $\{\tilde{\theta}_{(j-1)\mod N}\}$ and $\{\tilde{\theta}_{(j+2)\mod N}\}$ is an adjacent-$3$-angle set corresponding to the maximum condition number.
\end{proof}

\begin{lemma}
Let $N = 7$. Suppose $\tilde{\theta}_i$, $1\leq i \leq N$, is an optimal design which minimizes the maximum condition number among all adjacent-$3$-angle sets. Then there exists at most $1$ adjacent-$3$-angle set which has smaller condition number than the maximum condition number among all adjacent-$3$-angle sets.
\label{lem:atmostonesmaller}
\end{lemma}

\begin{proof}
We prove this lemma by contradiction.

Without loss of generality, suppose that $\{\tilde{\theta}_{1}\}$ has a condition number smaller than the maximum number and there exist $\geq 2$ adjacent-$3$-angle sets which have smaller condition numbers than the maximum condition number among all adjacent-$3$-angle sets. Then there must exist a sequence of consecutive angles, say $\tilde{\theta}_i$, $1\leq i \leq l$, for some $3 \leq l \leq N$, such that any adjacent-$3$-angle set $\{\tilde{\theta}_j\}$, $2\leq j \leq l-1$ has the maximum condition number while the first counter-clockwise adjacent-$3$-angle set $\{\tilde{\theta}_l\}$ and the first clockwise adjacent-$3$-angle set $\{\tilde{\theta}_{1}\}$ have smaller condition numbers than the maximum condition number.

Since $\{\tilde{\theta}_j\}$, $2\leq j \leq l-1$, have the \emph{equal} maximum condition number, the counter-clockwise regions between $\{\tilde{\theta}_j\}$, $1\leq j \leq l$ must alternate between $\alpha\geq 0$ and $\beta\geq 0$, where $\alpha+\beta \leq \pi$. Without loss of generality, we assume that $\alpha\geq \beta$. For now we also assume that $\alpha+\beta<\pi$. From \ref{lemma:lessthanpi}, when $\alpha+\beta=\pi$, only $1$ adjacent-$3$-angle set has a smaller than the maximum condition number.

We first consider the case where $l$ is an odd number,namely we have an even number of regions between angle $\tilde{\theta}_{1}$ and angle $\tilde{\theta}_{l}$. Since $\alpha+\beta<\pi$, we claim that $\beta$ must be equal to $0$. Suppose instead $\beta \neq 0$. Then we can shift the even-number-indexed angles $\{\tilde{\theta}_j\}$, $1\leq j \leq l$, counter-clockwise by a sufficiently small amount. This will strictly decrease the condition numbers for $\{\tilde{\theta}_j\}$, $2\leq j \leq l-1$. Since  $\{\tilde{\theta}_l\}$ and $\{\tilde{\theta}_{1}\}$ also have strictly smaller condition numbers than the maximum condition number, thus we have $\geq 2$ consecutive adjacent-$3$-angle sets which have the smaller condition number than the maximum condition number. This forms a contradiction by Lemma \ref{lem:notwoconsecutivebigger}.

So we must have $\beta=0$. However, this implies $\{\tilde{\theta}_{(l+1)\mod{N}}\}$ has a condition number no bigger than that of $\{\tilde{\theta}_l\}$. Thus we have two consecutive adjacent-$3$-angle sets $\{\tilde{\theta}_l\}$ and $\{\tilde{\theta}_{(l+1)\mod{N}}\}$ which have smaller condition numbers than the maximum condition number. This forms a contradiction by Lemma \ref{lem:notwoconsecutivebigger}.

We then consider the case where $l$ is an even number. Since $l\geq3$, such a number can only be $l=4$ or $l=6$.

When $l=4$, then $\{\tilde{\theta}_{6}\}$ must also have a smaller condition number than the maximum condition number. This is because,  from Lemma \ref{lem:notwoconsecutivebigger}, $\{\tilde{\theta}_{5}\}$ and $\{\tilde{\theta}_{7}\}$ can not have smaller condition numbers than the maximum condition number. Moreover, if $\{\tilde{\theta}_{6}\}$ also has the maximum condition number, $\{\tilde{\theta}_{j\mod N}\}$, $4\leq j \leq N+1$, are then consecutive angles such that $\{\tilde{\theta}_{5}\}$, $\{\tilde{\theta}_{6}\}$ and  $\{\tilde{\theta}_{7}\}$ all have the maximum condition numbers, which is not possible by our previous discussion of the cases when $l$ is an odd number.

 However, when $\{\tilde{\theta}_{6}\}$ has a smaller condition number than the maximum condition number, there are an even number (in fact, $2$,) of regions between angle $\tilde{\theta}_{4}$ and angle $\tilde{\theta}_{6}$, which is not possible by our previous discussion.

So in summary, the original assumption of $\geq 2$ adjacent-$3$-angle sets having larger than maximum condition number cannot hold. There exists at most $1$ adjacent-$3$-angle set which has smaller condition number than the maximum condition number.

\end{proof}

If every adjacent-$3$-angle set has the same condition number as the maximum condition number, then the region between every angle must be equal. So the cost function for the maximum condition number should be $2\cos(\frac{2\pi}{7})+\cos(\frac{4\pi}{7})$.

If there is exactly $1$ adjacent-$3$-angle set which has a smaller condition number than the maximum condition number, and $\{\tilde{\theta}_{1}\}$ is the unique adjacent-$3$-angle set that has the smallest condition number, then $\tilde{\theta}_{i}$, $1\leq i \leq 7$, can be respectively denoted by $0$, $\alpha$, $\alpha+\beta$, $2\alpha+\beta$, $2(\alpha+\beta)$, $3\alpha+2\beta$, and $3\alpha+3\beta$, where $\alpha\geq 0$, $\beta \geq 0$ and $\alpha> \beta$ and $4\alpha+3\beta=2\pi$. The cost function for the maximum condition number  $\cos(\alpha)+\cos(\beta)+\cos(\alpha+\beta)$ is thus minimized when $\beta=0$ and $\alpha=\frac{2\pi}{4}$ for $\alpha\geq 0$, $\beta \geq 0$ and $\alpha> \beta$ and $4\alpha+3\beta=2\pi$. This cost function is smaller than the cost function of $2\cos(\frac{2\pi}{7})+\cos(\frac{4\pi}{7})$, so $\theta_i=\frac{2\pi (i-1)}{N+1} \mod 2\pi$, $1 \leq i \leq N$ is indeed the optimal design.
\end{proof}

\section{$K=3$, $N\geq9$ is an Odd Number}

\begin{theorem}
Let $K=3$ and $N \geq 9$ be an odd number. Then the set of angles $\theta_i=\frac{2\pi (i-1)}{N+1} \mod \pi$, $1 \leq i \leq  N$, minimizes the maximum condition number among all sub-matrices with $K=3$ columns.
\label{thm:K3Nodd}
\end{theorem}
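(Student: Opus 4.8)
The plan is to follow the same two-phase template established in Theorems \ref{thm:K3NEven} and \ref{thm:K3N5}: first identify the maximum cost function achieved by the proposed design, then show no other design can do better. As in Theorem \ref{thm:K3N7}, I would work with the doubled angles $\hat{\theta}_i = 2\theta_i = \frac{4\pi(i-1)}{N+1} \bmod 2\pi$, which place the $N$ points on the circle so that they form $\frac{N+1}{2}$ coincident pairs except for one unpaired point (since $N$ is odd, $N+1$ is even, and the step $\frac{4\pi}{N+1}$ wraps around). First I would verify that the maximum cost among all adjacent-$3$-angle sets for this design is $2\cos\left(\frac{4\pi}{N+1}\right)+1$, attained when two of the three angles coincide. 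Then, exactly as in Theorem \ref{thm:K3N7}, it suffices to prove that this design minimizes the maximum condition number \emph{among the $N$ adjacent-$3$-angle sets}, because any non-adjacent triple has a cost function no larger than some adjacent triple.

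\textbf{Reduction via the adjacency lemmas.} For the lower-bound direction, I would reuse the machinery already developed for $N=7$, all of which was stated for general $N \geq 7$ or $N \geq 9$. Lemma \ref{lemma:lessthanpi} already gives that the counter-clockwise region covered by any three adjacent angles is \emph{strictly} smaller than $\pi$ when $N \geq 9$, and Lemma \ref{lem:notwoconsecutivebigger} already establishes, for all $N \geq 7$, that no two consecutive adjacent-$3$-angle sets can both have smaller-than-maximum condition number. The key structural step is to push Lemma \ref{lem:atmostonesmaller} from $N=7$ to odd $N \geq 9$: I would argue that in an optimal design at most one adjacent-$3$-angle set can have a condition number strictly below the maximum. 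The argument runs by contradiction on a maximal run of consecutive angles $\tilde{\theta}_1,\dots,\tilde{\theta}_l$ whose interior sets $\{\tilde{\theta}_j\}$, $2\le j\le l-1$, all attain the maximum cost while the two flanking sets fall below it. Equality of the interior costs forces the successive gaps to alternate between two widths $\alpha \ge \beta \ge 0$ with $\alpha+\beta<\pi$. When the run has an even number of interior gaps one can shift the even-indexed angles to strictly decrease all interior costs simultaneously, creating two adjacent below-maximum sets and contradicting Lemma \ref{lem:notwoconsecutivebigger}; when the parity forces $\beta=0$ one instead gets a neighboring below-maximum set directly. The even-$l$ cases from the $N=7$ proof must be replaced by a general parity argument valid for arbitrary odd $N$, which is where the main care is needed.

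\textbf{Finishing the count and optimizing.} Once Lemma \ref{lem:atmostonesmaller} is extended, there are only two cases. If every adjacent-$3$-angle set has equal cost, then all gaps are equal to $\frac{2\pi}{N}$ and the maximum cost is $2\cos\left(\frac{2\pi}{N}\right)+\cos\left(\frac{4\pi}{N}\right)$. If exactly one set, say $\{\tilde{\theta}_1\}$, has smaller cost, then the remaining $N-1$ gaps alternate between widths $\alpha$ and $\beta$ with $\alpha > \beta \ge 0$, subject to $\frac{N-1}{2}(\alpha+\beta) + (\text{the two special gaps around }\tilde{\theta}_1) = 2\pi$; carrying out the elementary minimization of $\cos\alpha+\cos\beta+\cos(\alpha+\beta)$ over this feasible region yields the optimum at $\beta=0$, $\alpha = \frac{4\pi}{N+1}$, giving the cost $2\cos\left(\frac{4\pi}{N+1}\right)+1$. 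The final step is the comparison $2\cos\left(\frac{4\pi}{N+1}\right)+1 < 2\cos\left(\frac{2\pi}{N}\right)+\cos\left(\frac{4\pi}{N}\right)$ for odd $N \ge 9$, which shows the second (non-uniform, paired) configuration is strictly better and hence optimal.

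\textbf{Main obstacle.} The hardest part will be the parity bookkeeping in extending Lemma \ref{lem:atmostonesmaller}: the $N=7$ proof handled the even-$l$ subcases $l=4,6$ by ad hoc enumeration, and for general odd $N$ I must show that \emph{no} maximal run of equal-cost interior sets of even length can exist without triggering a contradiction via Lemma \ref{lem:notwoconsecutivebigger}. A clean way to organize this is to track, around the whole circle, the alternating gap pattern forced by equal costs and use the total-sum constraint $\sum \text{gaps} = 2\pi$ together with the strict concavity giving $f'(t_2)<0$ on the relevant interval (from the lower-bound computation in Theorem \ref{thm:K3NEven}) to rule out any configuration with two or more below-maximum sets. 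The concluding inequality between the two candidate costs is routine and can be verified by comparing the two expressions as functions of $N$.
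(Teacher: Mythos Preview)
Your roadmap is the paper's: reuse Lemmas \ref{lemma:lessthanpi} and \ref{lem:notwoconsecutivebigger}, upgrade the at-most-one lemma to $N\ge 9$, then optimize over the forced $\alpha,\beta$ alternation. You have correctly located the only new work, the even-$l$ subcase of the at-most-one lemma, but your proposed ``global parity plus total-sum'' sketch is not what the paper does, and it is not clear it closes the case on its own. The paper's device is a \emph{two-block rearrangement}. With $l$ even the run has an odd number of gaps, so $\alpha\ge\beta$ is no longer WLOG; split into $\alpha\le\beta$ and $\alpha\ge\beta$. If $\alpha\le\beta$, shift the even-indexed angles $\tilde\theta_2,\tilde\theta_4,\dots,\tilde\theta_{l-2}$ clockwise by a small $\delta$: this does not raise the maximum but makes $\{\tilde\theta_2\}$ drop strictly below it, so $\{\tilde\theta_1\}$ and $\{\tilde\theta_2\}$ are two consecutive below-maximum sets, contradicting Lemma \ref{lem:notwoconsecutivebigger}. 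If $\alpha\ge\beta$, look at the \emph{neighboring} block, whose gaps alternate $\alpha_1,\beta_1$ with (by the previous paragraph) $\alpha_1\ge\beta_1$; equality of the interior costs across blocks lets you take $\alpha_1\le\alpha$, $\beta_1\ge\beta$. Reorder that block's gaps as $\beta_1,\alpha_1,\beta_1,\dots,\alpha_1,\alpha_1$: the boundary set between the blocks does not increase (it now sees $\beta_1\ge\beta$ in place of $\alpha_1\le\alpha$), and the new terminal set has cost $2\cos\alpha_1+\cos 2\alpha_1\le\cos\alpha_1+\cos\beta_1+\cos(\alpha_1+\beta_1)$, so the maximum is not raised---but two consecutive below-maximum sets are created at the far end, again contradicting Lemma \ref{lem:notwoconsecutivebigger}.

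One correction to your finishing step: when exactly one set $\{\tilde\theta_1\}$ lies below the maximum there are no ``special gaps''. All $N$ gaps alternate $\alpha,\beta,\alpha,\dots,\beta,\alpha$ with $\tfrac{N+1}{2}$ copies of $\alpha$ and $\tfrac{N-1}{2}$ of $\beta$, so the constraint is $\tfrac{N+1}{2}\alpha+\tfrac{N-1}{2}\beta=2\pi$; the two gaps flanking $\tilde\theta_1$ are both $\alpha$, which is exactly why that set alone has the smaller cost $2\cos\alpha+\cos 2\alpha$. Minimizing $\cos\alpha+\cos\beta+\cos(\alpha+\beta)$ under this constraint with $\alpha\ge\beta\ge 0$ then gives $\beta=0$, $\alpha=\tfrac{4\pi}{N+1}$, and your comparison with the uniform cost finishes the proof.
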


\begin{proof}
The proof of this theorem follows the proof of Theorem \ref{thm:K3N7}. The complication compared with Theorem \ref{thm:K3N7} comes from the fact that we need to prove the following lemma instead of Lemma \ref{lem:atmostonesmaller}.

\begin{lemma}
Let us take $N \geq 9$. Suppose that $\tilde{\theta}_i$, $1\leq i \leq N$, is an optimal design which minimizes the maximum condition number among all adjacent-$3$-angle sets. Then there exists at most $1$ adjacent-$3$-angle set which has a smaller condition number than the maximum condition number among all adjacent-$3$-angle sets.
\label{lem:atmostonesmallerN9}
\end{lemma}

\begin{proof}
We prove this lemma by contradiction.

Suppose that there exists $\geq 2$ adjacent-$3$-angle sets which have smaller condition numbers than the maximum condition number among all adjacent-$3$-angle sets. From Lemma \ref{lem:notwoconsecutivebigger}, we can always partition the $N$ angles into distinct blocks by using $\tilde{\theta}_j$'s  with $\{\tilde{\theta}_j\}$ having a strictly smaller condition number than the maximum condition number as the boundary angles between different blocks. From Lemma \ref{lem:notwoconsecutivebigger}, there must exist at least one angle between two boundary angles. 
%
Without loss of generality, suppose $\tilde{\theta}_1$ and $\tilde{\theta}_l$, $3\leq l \leq N$, are two neighboring boundary angles. Since $\{\tilde{\theta}_j\}$, $2\leq j \leq l-1$, have the \emph{equal} maximum condition number, the counter-clockwise regions between $\{\tilde{\theta}_j\}$, $1\leq j \leq l$ must alternate between $\alpha\geq 0$ and $\beta\geq 0$, where $\alpha+\beta < \pi$ according to Lemma \ref{lemma:lessthanpi}.

We first consider the case when $l$ is an odd number, namely we have an even number of regions between angle $\tilde{\theta}_{1}$ and angle $\tilde{\theta}_{l}$. Without loss of generality, we assume that $\alpha\geq \beta$ when $l$ is an odd number. Since $\alpha+\beta<\pi$, from the same reasoning as in the proof of Lemma \ref{lem:atmostonesmaller}\, we know this is not possible. %
%

We then consider the case when $l$ is an even number. If $l$ is an even number, we divide into two scenarios: $\alpha \geq \beta$ or $\alpha \leq \beta$.

If $\alpha \leq \beta$, we can simultaneously shift the even-numbered angles $\tilde{\theta}_j$, $j=2, 4, ..., l-2$, clockwise by the same sufficiently small angle $\delta>0$. Note that this shift will not increase the maximum condition number if $\delta$ is sufficiently small. However, this will create two consecutive adjacent-$3$-angle sets $\{\tilde{\theta}_2\}$ and $\{\tilde{\theta}_1\}$ which have smaller condition numbers than the maximum condition number.  According to Lemma \ref{lem:notwoconsecutivebigger}, this is contradictory to our assumption of an optimal design.

We now assume $\alpha \geq \beta$ and the number of regions in each block is an odd number. Consider two neighboring blocks separated by a single angle $j$ such that $\{\tilde{\theta}_j\}$ is an adjacent-3-angle set which has a smaller condition number than the maximum condition number.  Suppose that the second block is in the clockwise direction of the first block. The counter-clockwise region in the first block alternates between $\alpha$ and $\beta$; the counter-clockwise region in the second block alternates between $\alpha_1$ and $\beta_1$ with $\alpha_1\geq \beta_1$ (otherwise we are done by the discussion in last paragraph). Since the adjacent-$3$-angle sets inside each block have the maximum condition number, without loss of generality, we have $\alpha_1 \leq \alpha$, and $\beta_1 \geq \beta$.
If we change the regions of the $2$-nd block to be $\beta_1$, $\alpha_1$, $\beta_1$, $\alpha_1$, ..., $\alpha_1$, and $\alpha_1$. Since $\alpha_1\leq \alpha$ and $\beta_1\geq\beta$, in this change, we do not increase the condition number of $\{\tilde{\theta}_j\}$. It is not hard to check that as long as $\alpha_1+\beta_1< \pi$, the cost function $\cos(\alpha_1)+\cos(\alpha_1)+\cos(2\alpha_1)$ is smaller than the cost function $\cos(\alpha_1)+\cos(\beta_1)+\cos(\alpha_1+\beta_1)$. So in this change, we do not increase the maximum condition number among adjacent-$3$-angle sets, while creating two consecutive adjacent-$3$-angle sets at the clockwise end of the second block, which is a contradiction from Lemma \ref{lem:notwoconsecutivebigger}.

So in summary, there exists at most $1$ adjacent-$3$-angle set which has smaller condition number than the maximum condition number.

\end{proof}

So in the optimal design, the angles must alternate like $\alpha$, $\beta$, ..., $\alpha$, $\beta$, $\alpha$, where $\alpha \geq \beta$, and $\frac{N+1}{2}\alpha+\frac{N-1}{2}\beta=2\pi$. For $N\geq9$, the optimal angle allocation for $\alpha$ is $\frac{4\pi}{N+1}$ and $\beta=0$.

\end{proof}

\section{Conclusion and Future Work}
We propose the problem designing optimal $M \times N$ ($M \leq  N$) sensing matrices which minimize the maximum condition number of all the submatrices of $K$ columns.  Such matrices minimize the worst-case estimation errors when only $K$ sensors out of $N$ sensors are available for sensing at a given time.   When $M=2$ and $K=3$, for an arbitrary $N\geq3$, we derive the optimal matrices which minimize the maximum condition number of all the submatrices of $K$ columns. It is interesting that minimizing the maximum coherence between columns does not always guarantee minimizing the maximum condition number.

\end{document}